\definecolor{purple}{rgb}{0.5, 0.0, 0.5}
\definecolor{dark_green}{rgb}{0.0, 0.5, 0.0}
\definecolor{mygray}{gray}{0.6}
\definecolor{orange}{rgb}{1,0.5,0}
\DeclareMathAlphabet{\mathpzc}{OT1}{pzc}{m}{it}
\newcommand{\blue}{\color{blue}}
\newcommand{\purple}{\color{purple}}
\newcommand{\white}{\color{white}}
\newcommand{\gray}{\color{mygray}}
\newcommand{\Cc}{\mathcal{C}}
\newcommand{\ow}{\mathcal{O}}
\newcommand{\Hh}{\mathcal{H}}
\newcommand{\I}{\mathcal{I}}
\newcommand{\J}{\mathcal{J}}
\newcommand{\Mb}{\bold{M}}
\newcommand{\Wb}{\bold{W}}
\newcommand{\ub}{\bold{u}}
\newcommand{\vb}{\bold{v}}
\newcommand{\Yb}{\bold{Y}}
\newcommand{\yb}{\bold{y}}
\newcommand{\kbar}{\bar{k}}
\newcommand{\Gb}{\bold{G}}
\newcommand{\Gbb}{\bar{\Gb}}
\newcommand{\Gbt}{\tilde{\bold{G}}}
\newcommand{\T}{\mathcal{T}}
\newcommand{\Tbar}{\bar{T}}
\newcommand{\Z}{\mathbb{Z}}
\newcommand{\R}{\mathbb{R}}
\newcommand{\C}{\mathbb{C}}
\newcommand{\F}{\mathbb{F}}
\newcommand{\N}{\mathbb{N}}
\newcommand{\Q}{\mathbb{Q}}
\newcommand{\gb}{\bold{g}}
\newcommand{\Pb}{\bold{P}}
\newcommand{\Sb}{\bold{S}}
\newcommand{\Ab}{\bold{A}}
\newcommand{\Acal}{\mathcal{A}}
\newcommand{\Acalh}{\hat{\Acal}}
\newcommand{\Apzc}{\mathpzc{A}}
\newcommand{\Abt}{\tilde{\bold{A}}}
\newcommand{\bb}{\bold{b}}
\newcommand{\bbh}{\hat{\bold{b}}}
\newcommand{\Bb}{\bold{B}}
\newcommand{\Bbt}{\tilde{\bold{B}}}
\newcommand{\Bcal}{\mathcal{B}}
\newcommand{\Bcalb}{\bar{\mathcal{B}}}
\newcommand{\ab}{\bold{a}}
\newcommand{\cb}{\bold{c}}
\newcommand{\Dbt}{\tilde{\bold{D}}}
\newcommand{\Eb}{\bold{E}}
\newcommand{\eb}{\bold{e}}
\newcommand{\Ib}{\bold{I}}
\newcommand{\Hb}{\bold{H}}
\newcommand{\Kb}{\bold{K}}
\newcommand{\thetah}{\hat{\theta}}
\newcommand{\sfsty}[1]{\ensuremath{\mathsf{#1}}}  
\newcommand{\Enc}{\sfsty{Enc}}
\newcommand{\rsa}{\sfsty{RSA}}
\newcommand{\rs}{\sfsty{RS}}
\newcommand{\brs}{\sfsty{BRS}}
\newcommand{\bch}{\sfsty{BCH}}
\newcommand{\x}{\sfsty{x}}
\newcommand{\err}{\mathrm{err}}
\newcommand{\prp}{\mathrm{PRP}}
\DeclarePairedDelimiter\floor{\lfloor}{\rfloor}
\newtheorem{Thm}{Theorem}
\newtheorem{Cor}[Thm]{Corollary}
\newtheorem{Prop}[Thm]{Proposition}
\newtheorem{Lemma}[Thm]{Lemma}
\newtheorem{Def}[Thm]{Definition}
\newtheorem{Rmk}[Thm]{Remark}
\DeclareMathOperator{\GL}{GL}
\DeclareMathOperator{\nnz}{nnzr}
\DeclareMathOperator{\supp}{supp}
\DeclareMathOperator{\rank}{rank}
\newcommand{\qvec}[1]{\textbf{\textit{#1}}}
\newcommand{\vect}[1]{\boldsymbol{#1}}
\newcommand{\pb}{\vect{p}}
\newcommand{\xmark}{\ding{55}}
\begin{document}

\title{Securely Aggregated Coded Matrix Inversion}

\author{$\textbf{Neophytos Charalambides}^{\mu}$, $\textbf{Mert Pilanci}^{\sigma}$, \textbf{and} $\textbf{Alfred O. Hero III}^{\mu}$\\
$\text{\white.}^{\mu}$EECS Department University of Michigan $\text{\white.}^{\sigma}$EE Department Stanford University\\
  Email: \texttt{neochara@umich.edu}, \texttt{pilanci@stanford.edu}, \texttt{hero@umich.edu}
\vspace{-4mm}
}

\maketitle


\begin{abstract}
Coded computing is a method for mitigating straggling workers in a centralized computing network, by using erasure-coding techniques. Federated learning is a decentralized model for training data distributed across client devices. In this work we propose approximating the inverse of an aggregated data matrix, where the data is generated by clients; similar to the federated learning paradigm, while also being resilient to stragglers. To do so, we propose a coded computing method based on gradient coding. We modify this method so that the coordinator does not access the local data at any point; while the clients access the aggregated matrix in order to complete their tasks. The network we consider is not centrally administrated, and the communications which take place are secure against potential eavesdroppers.
\end{abstract}


\vspace{-1mm}
\section{Introduction and Related Work}
\label{intro}

Inverting a matrix is one of the most important operations in numerous applications, such as, signal processing, machine learning, and scientific computing \cite{GS59,Hig02}. A common way of inverting a matrix is to perform Gaussian elimination, which requires $\ow(N^3)$ operations for square matrices of order $N$. In high-dimensional applications, this can be cumbersome. Over the past few years the machine learning (ML) community has made much progress on \textit{federated learning} (FL), focusing on iterative methods.

The objective of FL is to leverage computation, communication and storage resources to perform distributed computations for ML models, where the data of each federated worker is never shared with the coordinator of the network; that aggregates local computations in order to update the model parameters. In FL applications it is important that the data is kept private and secure.

Distributed computations in the presence of \textit{stragglers} (workers who fail to compute their task or have longer response times) must account for the effect of non-responsive workers. Coding-theoretic approaches have been adopted for this purpose \cite{LLPPR17,YSRKSA18}, and fall under the framework of \textit{coded computing} (CC). Other techniques have also been utilized; to develop \textit{approximate} CC schemes, \textit{e.g.} equiangular tight frames \cite{KSD17} and sketching \cite{CMPH22}. Data security is also an increasingly important issue in CC \cite{LA20}. Despite the fact that multiplication algorithms imply inversion algorithms and vice versa, in the context of CC; matrix inversion has not been studied as extensively as \textit{coded matrix multiplication} (CMM) \cite{YMAA17}. The main reason for this is the fact that the latter is non-linear and non-parallelizable as an operator. We point out that distributed inversion algorithms do exist, though these make assumptions on the matrix, are specific for distributed and parallel computing platforms, and require a matrix factorization; or heavy and multiple communication instances between the workers and the coordinator.

In \cite{CPH22a} a CC method\footnote{We abbreviate `coded computing method/methods' to CCM/CCMs.} was proposed based on \textit{gradient coding} (GC) \cite{HASH17}, which approximates the inverse of a matrix $\Ab$. In order to overcome the obstacle of non-linearity, the columns of $\Ab^{-1}$ are approximated. When assuming infinite floating-point precision, this CCM introduces no numerical nor approximation errors. Note that GC and not CMM was utilized, as the latter does not require the encoding to be done locally by the workers.

Though the two areas of FL and CC seem to be closely related, on the surface they appear incompatible. For instance, in CC one often assumes there is a master server that distributes the data and may perform the encoding (encoding by the master server is done in CMM, but not in GC), while in FL the central coordinator never has access to the distributed local training data; which are located at different client nodes or workers.

There are a few recent works that leverage CC in order to devise secure FL methods for distributed regression and iterative optimization \cite{DPYTH19,PDAYTAH20,SKRA21,SRRA22,XARWZ22,HZSK19}. In this work, we combine optimization and CC, using erasure coding to protect against stragglers as in CC and locally approximating the inverse without revealing the data to the coordinator, to design a CCM which inverts a matrix from data aggregated through clients; and guarantees security against eavesdroppers. Our approach, is based on the \textit{coded matrix inversion method} (CMIM) we develop, which utilizes \textit{balanced Reed-Solomon} ($\brs$) codes \cite{HLH16,HLH16b}. This results in an efficient decoding in terms of the threshold number of responsive workers needed to  perform an error free computation. We show that the general class of \textit{maximum distance separable} (MDS) generator matrices could be used to generate a suitable erasure code (Theorem \ref{MDS_CC_thm}). The focus is on $\brs$ codes, which have the following advantages:
\begin{enumerate}[label=(\roman*)]
  \item minimum redundancy per task across the network,
  \item they optimize communication from workers to the master,
  \item we can efficiently decode the resulting method.
\end{enumerate}

As noted in \cite{DPYTH19,SKRA21}, most CCMs are not applicable in FL. In our case, the obstacle is that all clients need to know each others data in order to invert the aggregated matrix, which we elaborate on in \ref{know_of_A_subs}. For this reason, we relax the privacy restriction of FL and allow the clients to recover the aggregated matrix $\Ab$, which is necessary and unavoidable for matrix inversion.

Our CMIM can also be used to compute the Moore–Penrose pseudoinverse $\Yb^{\dagger}$ of a data matrix $\Yb\in\R^{M\times N}$ for $M\gg N$, which is more general than inverting a square matrix. By using the fact that $\Yb^{\dagger}=(\Yb^\top\Yb)^{-1}\Yb^\top$, the bottleneck is computing the inverse of $\Ab=\Yb^\top\Yb$. In addition, two more matrix multiplications need to take place distributively: computing $\Ab$ before the inversion; and $\widehat{\Ab^{-1}}\Yb^\top$ after the inverse has been approximated. The matrix products can be computed distributively using various CCMs, \textit{e.g.} we can use a modification of the coded FL approaches of \cite{SKRA21} and a CMM from \cite{CMH20}; both of which are based on GC. For the remainder of the paper, we focus on the generic problem of inverting a square matrix $\Ab$.

The proposed approach applies to general linear regression problems. Compared to traditional FL iterative approaches \cite{{KMRR16}}, the difference is that for $\Yb\theta=\pb$; with $\pb$ the label vector and $\theta$ the model parameters, the pseudoinverse-regularized regression solution is $\thetah=\widehat{\Yb^{\dagger}}\pb$.
Unlike conventional FL methods, this regularized regression can be computed non-iteratively. The non-iterative nature of the proposed approach is advantageous in settings such as Kalman filtering, where the matrix inverse must be updated in real time as measurements come in, as well as when dealing with time-series; and regularized regression with varying regularized coefficients.

We organize the paper as follows. In \ref{prel_sec} we recall basic facts on matrix inversion, least squares approximation and finite fields. In \ref{BRS_sec} we review $\brs$ codes, and prove two key lemmas regarding their generator matrices. In \ref{Appr_alg_sec} we present the matrix inverse approximation algorithm we utilize in our CCM. The main contribution is presented in \ref{FCMI_sec}. Our approach is split into four phases, which we group in pairs of two. First, we discuss information sharing from the coordinator to the workers (we consider all the clients' servers as the network's workers), and then information sharing between the workers. Second, we show how our inversion algorithm can be incorporated in linear CCMs, and describe how this fits into the relaxed FL setting we are considering. Concluding remarks and future work are presented in \ref{concl_sec}.

\subsection{Overview of the Coded Matrix Inversion Method}

In CC the computational network is centralized, and is comprised of a master server who communicates with $n$ workers. The idea behind our approximation is that the workers use a least squares solver to approximate multiple columns of $\Ab^{-1}$, resulting in a set of local approximations to submatrices of $\widehat{\Ab^{-1}}$, which we refer to as \textit{blocks}. We present approximation guarantees and simulation results for \textit{steepest descent} (SD) and \textit{conjugate gradient} (CG) iterative optimization methods. By locally approximating the columns in this way, the workers can linearly encode the blocks of $\widehat{\Ab^{-1}}$. The clients have a block of data $\{\Ab_\iota\}_{\iota=1}^k$, which constitute the data matrix $\Ab=\big[\Ab_1 \ \cdots \ \Ab_k\big]$. To simplify our presentation, we assume that each local data block is of the same size; \textit{i.e.} $\Ab_{\iota}\in\R^{N\times T}$ for $T=N/k$, and that client $i$ has $n_i$ servers. Therefore, the total number of workers is $n=\sum_{j=1}^kn_j$. We assume the blocks are of the same size, so that the encodings carried out by the clients are consistent. In \ref{FCMI_sec}, we show that this assumption is not necessary. Moreover, for the CCM, it is not required that the number of blocks equal the number of clients. For a given natural number $\gamma$, assume that $\gamma$ divides $T$; denoted $\gamma\mid T$ (each local data block $\Ab_\iota$ is further divided into $\gamma$ sub-blocks). In the case where $k\nmid N$ or $\gamma\nmid T$, we can pad the blocks of $\widehat{\Ab^{-1}}$ so that these assumptions are met.

A limitation of our proposed CMIM, is the fact that each worker needs to have full knowledge of $\Ab$, in order to estimate columns of $\Ab^{-1}$ through a least squares solver. The sensitivity of Gaussian elimination and matrix inversion also require that all clients have knowledge of each others' data \cite{CPH22a}. This limitation is shared by other coded federated learning methods, \textit{e.g}. \sfsty{CodedPaddedFL} \cite{SKRA21}, and further justifies our requirement that task allocations need to be carefully distributed across the workers, especially in the context of FL. In contrast to CC and GC; where a master server has access to all the data, in FL the data is inherently distributed across devices, thus GC cannot  be  applied directly. We also assume that the coordinator does not intercept the communication between the clients, otherwise she could recover the local data. Also, we trust that the coordinator will not invert $\widehat{\Ab^{-1}}$, to approximate $\Ab$ --- this would be computationally difficult, for $N$ large.

Before broadcasting the data amongst themselves, the clients encode their block $\Ab_i$, which guarantees security from outside eavesdroppers. When the clients receive the encoded data, they can decrypt and recover $\Ab$. Then, their servers act as the workers of the proposed CMIM and carry out their assigned computations, and directly communicate their computations back to the coordinator. Once the \textit{recovery threshold} (the minimum number of responses needed to recover the computation) is met, the approximation $\widehat{\Ab^{-1}}$ is recoverable.

\subsection{Coded Federated Learning}

There are few works that leverage CC to devise secure FL schemes. Most of these works have focused on distributed regression and iterative methods, which is the primary application for FL \cite{DPYTH19,PDAYTAH20,SKRA21,SRRA22,XARWZ22}. Below, we describe and compare these approaches to our work.

The authors of \cite{DPYTH19} proposed \textit{coded federated learning}, in which they utilize a CMM scheme. Their security relies on the use of random linear codes, to define the parity data. Computations are carried out locally on the systematic data, and only the parity data is sent to the coordinator. The main drawback compared to our scheme is that each worker has to generate a random encoding matrix and apply a matrix multiplication for the encodings, while we use the same $\brs$ generator matrix across the network, based on GC, to linearly encode the local computations. The drawback in our case, is that the workers need to securely share their data with each other. This is an artifact of the operation (inversion) we are approximating, and is inevitable in the general case where $\Ab$ has no structure. Under the relaxed FL setting we are considering, where the data is gathered or generated locally and is not i.i.d., we cannot make any assumptions on the structure of $\Ab$.

In \cite{SKRA21}, two methods were proposed. \sfsty{CodedPaddedFL} combines one-time-padding with GC to carry out the FL task. Some of its disadvantages are that a \textit{one-time-pad} (OTP) needs to be generated by each worker, and that the OTPs are shared with the coordinator, which means that if she gets hold of the encrypted data, she can decrypt it, compromising security. Furthermore, there is a heavy communication load and the coordinator needs to store all the pads in order to recover the computed gradients. In the proposed CMIM, the coordinator generates a set of interpolation points, and shares them with the clients. If the coordinator can intercept the communication between the workers, she can decrypt the encrypted data blocks. The second method proposed in \cite{SKRA21}, \sfsty{CodedSecAgg}, relies on \textit{Shamir's secret sharing} (SSS); which is based on polynomial interpolation over finite fields. In contrast, our CMIM relies on GC and Lagrange interpolation.

Lastly, we discuss the method proposed in \cite{XARWZ22}, which is based on the McEliece cryptosystem, and moderate-density parity-check codes. This scheme considers a communication delay model which defines stragglers as the workers who respond slower than the fastest worker, and time out after a predetermined amount of time $\Delta$. As the iterative SD process carries on, such workers are continuously disregarded. Due to this, there is a data sharing step at each iteration, at which the new stragglers communicate encrypted versions of their data to the active workers. Our scheme is non-iterative, and has a fixed recovery threshold. Unlike some of the works previously mentioned, which guarantee information-theoretic security, the McEliece based systems and our approach have \textit{computational} security guarantees.

\vspace{-1mm}
\subsection{Lagrange Interpolation and Polynomial CCMs}

While there is extensive literature on matrix-vector and  matrix-matrix multiplication, and computing the gradient in the presence of stragglers, there is limited work on computing or approximating the inverse of a matrix \cite{YGK17}. The non-linearity of matrix inversion prohibits linear or polynomial encoding of the data before the computations are to be performed. Consequently, most CCMs cannot be directly utilized. Gradient coding is the appropriate CC set up to consider \cite{TLDK17}, precisely because the encoding takes place once the computation has been completed, in contrast to most CMM methods where the encoding is done by the master, before the data is distributed. This helps improve the recovery threshold, which is a primary objective of the CMM problem.

Here, we give a brief overview of the GC scheme on which our CMIM is based. We also review ``Lagrange Coded Computing'' (LCC), which has relations to our approach. Then, we give a summary of our proposed CMIM. All these rely on Lagrange interpolation over finite fields. We then mention related CMM schemes based on Lagrange or polynomial interpolation.

Gradient codes are a class of codes designed to mitigate the effect of stragglers in data centers, by recovering the gradient of differentiable and additively separable objective functions in distributed first order methods \cite{TLDK17}. The proposed CMIM utilizes $\brs$ generator matrices constructed for GC \cite{HASH17}. The main difference from our work is that in GC the objective is to construct an encoding matrix $\Gb\in\C^{n\times k}$ and decoding vectors $\ab_\I\in\C^{k}$, such that $\ab_{\I}^\top\Gb_{\I}=\vec{\bold{1}}$ for any set of non-straggling workers indexed by $\I$. To do so, the decomposition of the $\brs$ generator matrices $\Gb_{\I}=\Hb_{\I}\Pb$ is exploited, where $\Hb_{\I}$ is a Vandermonde matrix; and the first row of $\Pb$ is equal to $\vec{\bold{1}}$. Subsequently $\ab_{\I}^\top$ is extracted as the first row of $\Hb_{\I}^{-1}$.

In the proposed CMIM framework, the objective is to design an \textit{encoding-decoding pair} $(\Gbt,\Dbt_\I)$ for which $\Dbt_\I\Gbt_\I=\Ib_N$, for all $\I\subsetneq\N_n$ of size $k$. The essential reason for requiring this condition, as opposed to that of GC, is that the empirical gradient of a given dataset is the sum of each individual gradients, while in our scenario if the columns of $\widehat{\Ab^{-1}}$ are summed; they cannot then be recovered.

The state-of-the art CC framework is LCC, which is used to compute arbitrary multivariate polynomials of a given dataset \cite{YSRKSA18}; and has since been considered in various settings \cite{SHPN20,FC19b,SMA21,SLMA21,ZL22}. This approach is based on Lagrange interpolation, and it achieves the optimal trade-off between resiliency, security, and privacy. The problem we are considering is not a multivariate polynomial in terms of $\Ab$. To securely communicate $\Ab$ to the workers, we encode it through Lagrange interpolation. Though similar ideas appear in LCC, the purpose and application of the interpolation is different. Furthermore, LCC is a \textit{point-based} approach and requires additional interpolation and linear combination steps after the decoding takes place, while ours is a \textit{coefficient-based} CCM \cite{KD22}.

Recall that the workers in the CMIM must compute blocks of $\widehat{\Ab^{-1}}$. Once they complete their computations, they encode them by computing a linear combination with coefficients determined by a sparsest-balanced MDS generator matrix. Referring to the advantages claimed for CMIM in Section \ref{intro}, working with MDS generator matrices allows us to meet points (i) and (ii), while $\brs$ generator matrices also helps us satisfy (iii). Once the recovery threshold is met, the coordinator can recover the approximation $\widehat{\Ab^{-1}}$. The structure of sparsest-balanced generator matrices is also leveraged to optimally allocate tasks to the workers, while linear encoding is what allows minimal communication load from the workers to the master. Security against eavesdroppers is guaranteed by encoding the local data through a modified Lagrange interpolation polynomial, before it is shared by the clients. This CMIM also extends to approximating $\Ab^{\dagger}$ \cite{CPH22a}.

Some of the earliest interpolation based CMM schemes are the Polynomial \cite{YMAA17} and \sfsty{MatDot} codes \cite{DFHJCG19}, both of which are point-based. The construction of `Polynomial Codes' has since been generalized to `Entangled Polynomial Codes' \cite{YMAA20}, which define similar polynomials to ours \eqref{lagr_pol_matr}, though their use differs. We use \eqref{lagr_pol_matr} to encrypt the clients' data blocks; and our decryption is an evaluation at a finite field point. For Entangled Polynomial Codes two such polynomials are defined; one for each input matrix, and their product determines another degree $R-2$ polynomial which is evaluated by each worker at a different point, before proceeding to the decoding step. 

In \sfsty{MatDot} codes \cite{DFHJCG19} two polynomials are defined, one corresponding to each input, where instead of the Lagrange polynomial in \eqref{lagr_pol_matr}; a monic monomial is multiplied by the partitions of the respective block submatrices. Then, analogous steps to those of Entangled Polynomial Codes take place, in order to recover the matrix product.

\vspace{-1mm}
\section{Preliminary Background}
\label{prel_sec}

The set of $N\times N$ invertible matrices is denoted by $\GL_N(\R)$. Recall that $\Ab\in\GL_N(\R)$ has a unique inverse $\Ab^{-1}$, such that $\Ab\Ab^{-1}=\Ab^{-1}\Ab=\Ib_N$. The simplest way of computing $\Ab^{-1}$ is by performing Gaussian elimination on $\big[\Ab|\Ib_N\big]$, which gives $\big[\Ib_N\big|\Ab^{-1}]$ in $\ow(N^3)$ operations. In Algorithm \ref{inv_alg}, we approximate $\Ab^{-1}$ column-by-column. We denote the $i^{th}$ row and column of $\Ab$ respectively by $\Ab_{(i)}$ and $\Ab^{(i)}$. The \textit{condition number} of $\Ab$ is $\kappa_2=\|\Ab\|_2\|\Ab^{-1}\|_2$. The largest, smallest and $i^{th}$ singular values of $\Ab$ are denoted by $\sigma_{\text{min}}(\Ab)$, $\sigma_{\text{max}}(\Ab)$ and $\sigma_i(\Ab)$ respectively. For $\I$ an index subset of the rows of a matrix $\Mb$, the matrix consisting only of the rows indexed by $\I$, is denoted by $\Mb_\I$. We denote the set of integers between $1$ and $\nu$ by $\N_\nu$. The support of a vector $\vb$ is denoted by $\supp(\vb)$, and the number of nonzero elements of $\Ab$ by $\nnz(\Ab)$.

In the proposed algorithm we approximate $N$ instances of the least squares minimization problem
\begin{equation}
\label{OLS}  
  \theta^{\star}_{ls} = \arg\min_{\theta\in\R^M} \left\{\|\Ab\theta-\yb\|_2^2\right\} 
\end{equation}
for $\Ab\in\R^{N\times M}$ and $\yb\in\R^N$. In many applications $N\gg M$, where the rows represent the feature vectors of a dataset. This has the closed-form solution $\theta^{\star}_{ls} = \Ab^{\dagger}\yb$.

Computing $\Ab^{\dagger}$ to solve \eqref{OLS} is intractable for large $M$, as it requires computing the inverse of $\Ab^\top\Ab$. Instead, we use gradient methods to get \textit{approximate} solutions, \textit{e.g}. SD or CG, which require less operations, and can be done distributively. One could use second-order methods; \textit{e.g}. Newton–Raphson, Gauss-Newton, Quasi-Newton, BFGS, or Krylov subspace methods instead. This would be worthwhile future work.

When considering a minimization problem with a convex differentiable objective function $\psi\colon\Theta\to\R$ over an open convex set $\Theta\subseteq \R^M$, {as in \eqref{OLS}}, the SD procedure selects an initial $\theta^{[0]}\in\Theta$, and then updates $\theta$ according to:
$$ \theta^{[t+1]}=\theta^{[t]}-\xi_t\cdot\nabla_{\theta}\psi(\theta^{[t]}), \quad \text{ for } t=0,1,2,\ldots $$
until a termination criterion is met, for $\xi_t$ the step-size. The CG method is the most used and prominent iterative procedure for numerically solving systems of positive-definite equations.

Our proposed coding scheme is defined over the multiplicative cyclic group $(\F_q^\times,\cdot)$, for $\F_q$ the finite field of $q$ elements and $\F_q^{\times}=\F_q\backslash\{0_{\F_q}\}$ its set of units. For implementation purposes, we identify $\F_q^\times$ with its realization in $\C$ as a subgroup of the circle group, since we assume our data is over $\R$. All operations can therefore be carried out over $\C$. Specifically, we identify $\beta$ as an arbitrary primitive generator of $(\F_q^\times,\cdot)$. One such case is to identify $\beta\mapsto e^{2\pi i/(q-1)}$. Thus, for all $j\in\N_{q-1}$; we identify $\beta^j\mapsto e^{2\pi ij/(q-1)}$.

\vspace{-1mm}
\section{Balanced Reed-Solomon Codes}
\label{BRS_sec}

A Reed-Solomon code $\rs_q[n,k]$ over $\F_q$ for $q>n>k$, is the encoding of polynomials of degree at most $k-1$, for $k$ the message length and $n$ the code length. It represents our message over the \textit{defining set of points} $\Apzc=\{\alpha_j\}_{j=1}^n\subset\F_q$
\begin{align*}
  \rs_q[n,k]=\Big\{\big[f&(\alpha_1),f(\alpha_2),\cdots,f(\alpha_n)\big] \ \Big|\\
  & f(X)\in\F_q[X] \text{ of degree }\leqslant k-1 \Big\}
\end{align*}
where $\alpha_j=\alpha^j$, for $\alpha$ a primitive root of $\F_q$. Hence, each $\alpha_i$ is distinct. A natural interpretation of $\rs_q[n,k]$ is through its encoding map. Each message $(m_0,...,m_{k-1})\in\F_q^k$ is interpreted as $f(\x)=\sum_{i=0}^{k-1}m_i\x^i\in\F_q[\x]$, and $f$ is evaluated at each point of $\Apzc$. From this, $\rs_q[n,k]$ can be defined through the generator matrix
\vspace{-1mm}
$$ \Gb = \begin{pmatrix} 1 & \alpha_1 & \alpha_1^2 & \hdots & \alpha_1^{k-1} \\ 1 & \alpha_2 & \alpha_2^2 & \hdots & \alpha_2^{k-1} \\ \vdots & \vdots & \vdots & \ddots & \vdots \\ 1 & \alpha_n & \alpha_n^2 & \hdots & \alpha_n^{k-1} \end{pmatrix} \in \F_q^{n\times k} , $$
thus, $\rs$ codes are linear codes over $\F_q$. Furthermore, they attain the Singleton bound, \textit{i.e.} $d=n-k+1$, where $d$ is the code's distance, which implies that they are MDS.

Balanced Reed-Solomon codes \cite{HLH16,HLH16b} are a family of linear MDS error-correcting codes with generator matrices $\Gb\in\F_q^{n\times k}$ that are:
\begin{itemize}
  \item \textbf{sparsest}: each \textit{column} has the least possible number of nonzero entries
  \item \textbf{balanced}: each \textit{row} contains the same number of nonzero entries
\end{itemize}
for the given code parameters $k$ and $n$. The design of these generators are suitable for our purposes, as:
\begin{enumerate}
  \item we have balanced loads across homogeneous workers,
  \item sparse generator matrices reduce the computation tasks across the network,
  \item the MDS property permits an efficient decoding step,
  \item linear codes produce a compressed representation of the encoded blocks.
\end{enumerate}

\vspace{-1mm}
\subsection{Balanced Reed-Solomon Codes for CC}
\label{BRS_CC_subs}

\vspace{-1mm}
In the proposed CMIM, we leverage $\brs$ generator matrices to approximate $\Ab^{-1}$ distributively. For simplicity, we will consider the case where $d=s+1=\frac{nw}{k}$ is a positive integer\footnote{The case where $\frac{nw}{k}\in\Q_+\backslash\Z_+$ is analyzed in \cite{HASH17}, and also applies to our approach. We restrict our discussion to the case where $\frac{nw}{k}\in\Z_+$.}, for $n$ the number of workers and $s$ the number of stragglers. Furthermore, $d$ is the distance of the code and $\|\Gb^{(j)}\|_0=d$ for all $j\in\N_k$; $\|\Gb_{(i)}\|_0=w$ for all $i\in\N_n$, and $d>w$ since $n>k$. For decoding purposes, we require that at least $k=n-s$ workers respond. Consequently, $d=s+1$ implies that $n-d=k-1$. For simplicity, we also assume $d\geqslant n/2$. In our setting, each column of $\Gb$ corresponds to a computation task of computing a block of $\widehat{\Ab^{-1}}$; which we will denote by $\Acalh_i$, and each row corresponds to a worker.

Our choice of such a generator matrix $\Gb\in\F_q^{n\times k}$, solves
\begin{equation}
\label{min_G_problem}
\begin{aligned}
\arg\min_{\Gb\in\F_q^{n\times k}} \quad & \big\{\nnz(\Gb)\big\}\\
\textrm{s.t.} \quad & \|\Gb_{(i)}\|_0\geqslant w,\ \forall i\in\N_n\\
  & \|\Gb^{(j)}\|_0\geqslant d,\ \forall j\in\N_k\\
  & \rank(\Gb_\I)=k,\ \forall \I:|\I|=k
\end{aligned}
\end{equation}
which determines an optimal task allocation among the workers of the proposed CMIM. The first and second constraints are analogous to the bound of \cite[Theorem 1]{TLDK17}, which is met with equality in ``perfectly balanced GC schemes''. This theorem states that if all rows of $\Gb$ have the same number of nonzeros, then $\|\Gb_{(i)}\|_0\geqslant k(s+1)/n$, for all $i$. By construction, the generator matrix $\Gb$ we propose, meets the first and second constraints with equality, for all $i\in\N_n$ and $j\in\N_k$.

Under the above assumptions, the entries of the generator matrix of a $\brs_q[n,k]$ code meet the following:
\begin{itemize}
  \item each column is sparsest, with exactly $d$ nonzero entries
  \item each row is balanced, with $w=\frac{dk}{n}$ nonzero entries
\end{itemize}
where $d$ equals to the number of workers who are tasked to compute each block, and $w$ is the number of blocks that are computed by each worker.

Each column $\Gb^{(j)}$ corresponds to a polynomial $p_j(\x)$, whose entries are the evaluation of the polynomial we define in \eqref{lagr_polys} at each of the points of the defining set $\Apzc$, \textit{i.e.} $\Gb_{ij}=p_j(\alpha_i)$ for $(i,j)\in\N_n\times\N_k$. To construct the polynomials $\{p_j(\x)\}_{j=1}^k$, for which $\text{deg}(p_j)\leqslant\nnz(\Gb^{(j)})=n-d=k-1$, we first need to determine a sparsest and balanced \textit{mask matrix} $\Mb\in\{0,1\}^{n\times k}$, which is $\rho$-sparse for $\rho=\frac{d}{n}$; \textit{i.e.} $\nnz(\Gb)=\rho nk$. We use the construction from \cite{HASH17}, though it is fairly easy to construct more general such matrices, by using the Gale-Ryser Theorem \cite{DSDY13,Kra96}. Even though this was not pointed out in \cite{HASH17}, their construction of $\Mb$ (Algorithm \ref{RBMM}) does not always produce a mask matrix of the given parameters when we select $d<n/2$. This is why in our work we require $d\geqslant n/2$. Furthermore, deterministic constructions resemble generator matrices of cyclic codes.

For our purposes we use $\Bcal=\{\beta_j\}_{j=1}^n$ as our defining set of points, where each point corresponds to the worker with the same index. The objective now is to devise the polynomials $p_j(\x)$, for which $p_j(\beta_i)=0$ if and only if $\Mb_{ij}=0$. Therefore:
\begin{enumerate}[label=(\Roman*)]
  \item $\Mb_{ij}=0 \quad \implies \quad (\x-\beta_i)\mid p_j(\x)$
  \item $\Mb_{ij}\neq0 \quad \implies \quad p_j(\beta_i)\in\F_q^{\times}$
\end{enumerate}
for all pairs $(i,j)$.

The construction of $\brs_q[n,k]$ from \cite{HLH16} is based on what the authors called \textit{scaled polynomials}. Below, we summarize the polynomial construction based on Lagrange interpolation \cite{HASH17}. We then prove a simple but important result that allows us to efficiently perform the decoding step.

The univariate polynomials corresponding to each column $\Gb^{(j)}$, are defined as:
\begin{equation}
\label{lagr_polys}
  p_j(\x) \coloneqq \prod\limits_{i:\Mb_{ij}=0}\left(\frac{\x-\beta_i}{\beta_j-\beta_i}\right) = \sum\limits_{\iota=1}^{k}p_{j,\iota}\cdot\x^{\iota-1}\in\F_q[\x]
\end{equation}
which satisfy (I) and (II). By the $\bch$ bound \cite[Chapter 9]{Mc01}, it follows that $\text{deg}(p_j)\geqslant n-d=k-1$ for all $j\in\N_k$. Since each $p_j(\x)$ is the product of $n-d$ monomials, we conclude that the bound on the degree is satisfied and met with equality, hence $p_{j,\iota}\in\F_q^{\times}$ for all coefficients.

By construction, $\Gb$ is decomposable into a Vandermonde matrix $\Hb\in\Bcal^{n\times k}$ and a matrix comprised of the polynomial coefficients $\Pb\in(\F_q^{\times})^{k\times k}$ \cite{HASH17}. Specifically, $\Gb=\Hb\Pb$ where $\Hb_{ij}=\beta_i^{j-1}=\beta^{i(j-1)}$ and $\Pb_{ij}=p_{j,i}$ are the coefficients from \eqref{lagr_polys}. This can be interpreted as $\Pb^{(j)}$ defining the polynomial $p_j(\x)$, and $\Hb_{(i)}$ is comprised of the first $k$ positive powers of $\beta_i$ in ascending order, therefore
\begin{equation*}
  p_j(\beta_i) = \sum\limits_{\iota=1}^{k}p_{j,\iota}\cdot \beta_i^{\iota-1} = \langle\Hb_{(i)},\Pb^{(j)}\rangle .
\end{equation*}

The following lemmas will help us respectively establish in our CC setting the efficiency of our decoding step and the optimality of the allocated tasks to the workers. For Lemma \ref{inverse_lem}, recall that efficient matrix multiplication algorithms have complexity $\ow(N^\omega)$, for $\omega<2.372$ the \textit{matrix multiplication exponent} \cite{WXXZ23}.

\begin{Lemma}
\label{inverse_lem}
The restriction $\Gb_{\I}\in\F_q^{k\times k}$ of $\Gb$ to any of its $k$ rows indexed by $\I\subsetneq\N_n$, is an invertible matrix. Moreover, its inverse can be computed online in $\ow(k^2+k^{\omega})$ operations.
\end{Lemma}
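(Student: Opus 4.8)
The plan is to treat the two assertions separately, letting the factorization $\Gb = \Hb\Pb$ do most of the work. Since $\brs_q[n,k]$ is MDS, every choice of $k$ rows of its generator matrix is linearly independent — this is precisely the rank condition $\text{rank}(\Gb_\I) = k$ imposed in \eqref{min_G_problem} — so $\Gb_\I \in \F_q^{k\times k}$ is invertible for every $\I \subsetneq \N_n$ with $|\I| = k$, which proves the first claim.

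For the inverse itself, I would restrict $\Gb = \Hb\Pb$ to the rows indexed by $\I$, obtaining $\Gb_\I = \Hb_\I\Pb$, where $\Hb_\I \in \Bcal^{k\times k}$ is the square Vandermonde matrix on the distinct nodes $\{\beta_i : i \in \I\}$ and $\Pb \in (\F_q^\times)^{k\times k}$ does not depend on $\I$. A square Vandermonde matrix on distinct nodes is invertible, so $\Hb_\I$ is invertible; and since $\Gb_\I$ is invertible by the previous paragraph, $\Pb = \Hb_\I^{-1}\Gb_\I$ is invertible as a product of invertible matrices. Hence $\Gb_\I^{-1} = \Pb^{-1}\Hb_\I^{-1}$, and the complexity claim reduces to bounding the cost of each factor. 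The matrix $\Pb$, and therefore $\Pb^{-1}$, is fixed by the code parameters $(n,k)$ and the mask matrix $\Mb$ alone, so $\Pb^{-1}$ is computed once offline and carries no online cost. Given the responsive set $\I$, one computes $\Hb_\I^{-1}$ online in $\ow(k^2)$ operations by the classical Vandermonde inversion (form the node polynomial $\prod_{i\in\I}(\x - \beta_i)$ in $\ow(k^2)$ steps, then for each column deflate one linear factor and normalize by $\prod_{i'\in\I,\,i'\neq i}(\beta_i - \beta_{i'})$ at cost $\ow(k)$), and then forms the single $k\times k$ matrix product $\Pb^{-1}\Hb_\I^{-1}$ in $\ow(k^\omega)$. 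Summing the two online steps yields the stated $\ow(k^2 + k^\omega)$ bound (which is $\ow(k^\omega)$, since $\omega \geqslant 2$).

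The routine ingredients here are the $\ow(k^2)$ estimate for Vandermonde inversion and the observation that $\Pb^{-1}$ is a one-time preprocessing cost; neither should cause difficulty. The single point worth flagging is the invertibility of $\Pb$: having all entries in $\F_q^\times$ does not by itself make a matrix nonsingular, so the argument deliberately routes through $\Pb = \Hb_\I^{-1}\Gb_\I$ rather than attempting to read nonsingularity off the coefficient description \eqref{lagr_polys}. Beyond this I anticipate no genuine obstacle.
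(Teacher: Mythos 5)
Your proof is correct, and the complexity half — restrict the factorization $\Gb=\Hb\Pb$ to $\I$, invert the square Vandermonde block $\Hb_\I$ online in $\ow(k^2)$, treat $\Pb^{-1}$ as an offline precomputation, and pay $\ow(k^\omega)$ for the product — is exactly the paper's argument. Where you differ is in the direction of the invertibility argument. The paper establishes the two factors separately: $\det(\Hb_\I)=\prod_{\{i<j\}\subseteq\I}(\beta_j-\beta_i)\neq 0$ because $\beta$ is primitive and the nodes are distinct, and $\Pb$ is invertible by the BCH bound together with \cite[Lemma 1]{HLH16}; invertibility of $\Gb_\I$ then follows as the product. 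You instead take the MDS property of $\brs_q[n,k]$ as an external fact, read off invertibility of $\Gb_\I$ from the MDS theorem (every $k$ rows of $\Gb$ are independent), and recover invertibility of $\Pb$ as $\Hb_\I^{-1}\Gb_\I$. Both routes are valid and both ultimately lean on \cite{HLH16}; the paper's version is self-contained in the sense that it does not presuppose the code is MDS (indeed, showing every $\Gb_\I$ is nonsingular is essentially the content of the MDS claim for this construction), whereas yours front-loads that into the citation and gets invertibility of $\Pb$ for free. You were right to flag that nonzero entries of $\Pb$ alone do not give nonsingularity — the paper handles this with the BCH-bound argument, and your detour through $\Gb_\I$ is a clean alternative.
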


\begin{proof}
The matrices $\Hb$ and $\Pb$ are of size $n\times k$ and $k\times k$ respectively. The restricted matrix $\Gb_{\I}$ is then equal to $\Hb_{\I}\Pb$, where $\Hb_{\I}\in\F_q^{k\times k}$ is a square Vandermonde matrix, which is invertible in $\ow(k^2)$ time \cite{BP70}. Specifically
$$ \Hb_{\I} = \begin{pmatrix} 1 & \beta_{\I_1} & \beta_{\I_1}^2 & \hdots & \beta_{\I_1}^{k-1} \\ 1 & \beta_{\I_2} & \beta_{\I_2}^2 & \hdots & \beta_{\I_2}^{k-1} \\ \vdots & \vdots & \vdots & \ddots & \vdots \\ 1 & \beta_{\I_k} & \beta_{\I_k}^2 & \hdots & \beta_{\I_k}^{k-1} \end{pmatrix} \in \F_q^{k\times k} . $$
It follows that
$$ \det(\Hb_{\I}) = \prod\limits_{\{i<j\}\subseteq\I}(\beta_j-\beta_i) $$
which is nonzero, since $\beta$ is primitive. Therefore, $\Hb_\I$ is invertible. By \cite[Lemma 1]{HLH16} and the $\bch$ bound, we conclude that $\Pb$ is also invertible. Hence, $\Gb_\I$ is invertible for any set $\I$.

Note that the inverse of $\Pb$ can be computed a priori by the master before we deploy our CCM. Therefore, computing $\Gb_\I^{-1}$ online with knowledge of $\Pb^{-1}$, requires an inversion of $\Hb_\I$ which takes $\ow(k^2)$ operations; and then multiplying it by $\Pb^{-1}$. Thus, it requires $\ow(k^2+k^{\omega})$ operations in total.
\end{proof}

\begin{Lemma}
\label{lem_sol_opt_prob}
The generator matrix $\Gb\in\F_q^{n\times k}$ of a $\brs_q[n,k]$ MDS code defined by the polynomials $p_j(\x)$ of \eqref{lagr_polys}, solves the optimization problem \eqref{min_G_problem}.
\end{Lemma}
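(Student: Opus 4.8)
The plan is to verify that the generator matrix $\Gb$ constructed via the polynomials $p_j(\x)$ of \eqref{lagr_polys} satisfies all three constraints of \eqref{min_G_problem}, and then that it is optimal — i.e., that $\text{nnzr}(\Gb)=\rho nk$ is the minimum possible value of the objective subject to those constraints. First I would establish feasibility. The rank constraint $\text{rank}(\Gb_\I)=k$ for all $|\I|=k$ is exactly the content of Lemma \ref{inverse_lem}, so I would simply cite it. For the row and column weight constraints, I would argue from properties (I)/(II): by construction $p_j(\beta_i)=0$ iff $\Mb_{ij}=0$, so the support pattern of $\Gb$ coincides with the support pattern of the mask matrix $\Mb$. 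Since $\Mb$ was chosen to be $\rho$-sparse, balanced, and sparsest — each column having exactly $d=\rho n$ nonzeros and each row exactly $w=\rho k = dk/n$ nonzeros — the matrix $\Gb$ inherits $\|\Gb^{(j)}\|_0=d$ for all $j$ and $\|\Gb_{(i)}\|_0=w$ for all $i$, hence $\text{nnzr}(\Gb)=dk=\rho nk$. I should double-check here that property (II) genuinely forces $p_j(\beta_i)\neq 0$ whenever $\Mb_{ij}\neq 0$: this follows because $p_j$ has degree exactly $k-1=n-d$ and already vanishes at the $n-d$ distinct points $\{\beta_i : \Mb_{ij}=0\}$, so it cannot vanish at any further point of $\Bcal$.

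Next I would prove optimality, i.e., that no feasible $\Gb'$ can have strictly fewer nonzero entries. The key observation is that the rank constraint imposes a \emph{lower} bound on every column weight. Suppose some column $\Gb'^{(j)}$ had fewer than $d=s+1$ nonzero entries, say at most $d-1 = s$ of them. Then there are at least $n-s = k$ rows in which that column vanishes; choosing $\I$ to be $k$ such rows gives a submatrix $\Gb'_\I\in\F_q^{k\times k}$ with an all-zero column, hence $\text{rank}(\Gb'_\I)<k$, violating the constraint. Therefore every column must have at least $d$ nonzeros, so $\text{nnzr}(\Gb')\geqslant dk$. Since $\Gb$ achieves exactly $dk$, it is optimal. (One could phrase the same lower bound via the Singleton/MDS bound — an MDS code has minimum distance $d=n-k+1$, which forces each column of any generator matrix to have weight at least $d$ — but the direct rank argument is cleaner and self-contained.) I would also note that the balanced constraint, combined with $\text{nnzr}(\Gb')=\sum_j \|\Gb'^{(j)}\|_0 = \sum_i \|\Gb'_{(i)}\|_0 = nw$, then forces $w=dk/n$ and pins down the per-row weight, so $\Gb$ meets every constraint with the allocation being essentially unique up to the choice of support pattern.

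The main obstacle is not any single hard step but making sure the two halves fit together cleanly: one must confirm that the $\brs$ construction's support equals $\text{supp}(\Mb)$ \emph{exactly} (not merely contains or is contained in it), which is where properties (I) and (II) and the tight degree count $\deg(p_j)=k-1$ all get used simultaneously — this is the one place where a sloppy argument could slip. The optimality direction is then a short counting-plus-rank argument and should present no difficulty, provided one remembers that $d=s+1$ and $k=n-s$ so that "fewer than $d$ nonzeros in a column" translates into "at least $k$ zero rows in that column," which is precisely the number needed to build a rank-deficient $k\times k$ restriction.
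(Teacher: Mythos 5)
Your proposal is correct, and the feasibility half matches the paper's: the paper likewise verifies the weight constraints "by definition" (your support-pattern argument via properties (I)/(II) and the degree count $\deg(p_j)=n-d$ just makes explicit what the paper leaves implicit) and handles the rank constraint by invoking the MDS theorem, where you instead cite Lemma \ref{inverse_lem} — both are legitimate, since that lemma is exactly the invertibility of every $\Gb_\I$. Where you genuinely diverge is the optimality half. The paper simply observes that the sparsity constraints of \eqref{min_G_problem} force $\text{nnzr}(\Gb)\geqslant\max\{nw,kd\}$, notes $nw=kd$ for the chosen parameters, and then counts the nonzeros of the constructed $\Gb$ column by column to confirm equality; since the weight constraints are equalities, the objective is essentially pinned down on the feasible set and the argument is a direct verification. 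You instead derive the lower bound $\|\Gb'^{(j)}\|_0\geqslant d$ from the \emph{rank} constraint alone: a column with at most $s=d-1$ nonzeros vanishes on at least $k$ rows, yielding a rank-deficient $k\times k$ restriction. This is the Singleton-bound argument in disguise, and it buys something the paper's proof does not — it shows $\text{nnzr}\geqslant dk$ would hold even if the explicit column-weight constraint were dropped, so the construction is optimal among all matrices satisfying only the MDS-type rank condition. Both proofs are valid; yours is slightly longer but more self-contained and more informative about where the lower bound actually comes from.
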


\begin{proof}
The first two constraints are satisfied by the construction of $\Gb$, which meets the sparsest and balanced constraints with equality; for the given parameters. The last constraint is implied by the MDS theorem, which states that every set of $k$ rows of $\Gb$ is linearly independent.

The sparsity constraints of \eqref{min_G_problem} imply that $\nnz(\Gb)\geqslant\max\{nw,kd\}$, and for our parameters we have $nw=kd$. Both the first and second constraints are met with equality for the chosen $\Gb$. Moreover
\begin{align*}
  \nnz(\Gb) &=\sum_{j\in\N_k}\nnz(\Gb^{(j)})\\
  &=\sum_{j\in\N_k}\#\big\{p_j(\beta_i)\neq0 : \beta_i\in\Bcal\big\}\\
  &=\sum_{j\in\N_k}n-\big\{i : \Mb_{ij}=0\big\}\\
  &=\sum_{j\in\N_k}n-(n-d)\\
  &=kd
\end{align*}
and the proof is complete.
\end{proof}

We conclude this subsection by recalling how the decomposition $\Gb=\Hb\Pb$ is utilized for GC \cite{HASH17}. Each column of $\Gb$ corresponds to a partition of the data whose partial gradient is to be computed. The polynomials are judiciously constructed in this scheme, such that the constant term of each polynomial is 1, thus $\Pb_{(1)}=\vec{\bold{1}}$. By this, the decoding vector $\ab_{\I}^\top$ is the first row of $\Hb_{\I}^{-1}$, for which $\ab_{\I}^\top\Hb_{\I}=\bold{e}_1^\top$. A direct consequence of this is that $\ab_{\I}^\top\Gb_{\I}=\bold{e}_1^\top\Pb=\Pb_{(1)}=\vec{\bold{1}}$, which is the objective for constructing a GC scheme.

\vspace{-1mm}
\section{Inverse Approximation Algorithm}
\label{Appr_alg_sec}

Our goal is to estimate $\Ab^{-1}=\big[\bb_1 \ \cdots \ \bb_N \big]$, for $\Ab$ a square matrix of order $N$. A key property to note is
$$ \Ab\Ab^{-1} = \Ab\big[\bb_1 \ \cdots \ \bb_N \big] = \big[\Ab\bb_1 \ \cdots \ \Ab\bb_N \big] = \bold{I}_N $$
which implies that $\Ab\bb_i=\eb_i$ for all $i\in\N_N$, where $\eb_i$ are the standard basis column vectors. Assume for now that we use any black-box least squares solver to estimate
\begin{equation}
\label{inv_LS}
  \hat{\bb}_i \approx \arg\min_{\bb\in\R^N} \Big{\{g_i(\bb)\coloneqq\|\Ab\bb-\eb_i\|_2^2}\Big\}
\end{equation}
which we call $N$ times, to recover $\widehat{\Ab^{-1}} \coloneqq \big[ \hat{\bb}_1 \ \cdots \ \hat{\bb}_N \big]$. This approach may be viewed as approximating
\begin{equation*}
\label{inv_LS_F}
  \widehat{\Ab^{-1}} \approx \arg\min_{\ \ \Bb\in\R^{N\times N}}\left\{\|\Ab\Bb-\bold{I}_N\|_F^2\right\}.
\end{equation*}
Alternatively, one could estimate the rows of $\Ab^{-1}$. Algorithm \ref{inv_alg} shows how this can be performed by a single server.

\begin{algorithm}[h]
\label{inv_alg}
\SetAlgoLined
\KwIn{$\Ab\in\GL_N(\R)$}
  \For{i=1 to N}
  {
    approximate $\hat{\bb}_i \approx \arg\min_{\bb\in\R^N} \left\{\|\Ab\bb-\eb_i\|_2^2\right\}$ 
  }
 \Return $\widehat{\Ab^{-1}} \gets \big[ \hat{\bb}_1 \ \cdots \ \hat{\bb}_N \big]$
 \caption{Estimating $\Ab^{-1}$}
\end{algorithm}

In the case where SD is used to approximate $\hat{\bb}_i$ from \eqref{inv_LS}, the overall operation count is $\ow(\T_i N^2)$; for $\T_i$ the total number of descent iterations used. An upper bound on the number of iterations can be determined by the underlying termination criterion, \textit{e.g}. the criterion $g_i(\bbh^{[t]})-g_i(\bb^{\star}_{ls})\leqslant\epsilon$ is guaranteed to be satisfied after $\T=\ow(\log(1/\epsilon))$ iterations \cite{BV04}. The overall error of $\widehat{\Ab^{-1}}$ may be quantified as
\begin{itemize}
  \item $\text{err}_{\ell_2}(\widehat{\Ab^{-1}}) \coloneqq \|\widehat{\Ab^{-1}}-\Ab^{-1}\|_2$
  \item $\text{err}_F(\widehat{\Ab^{-1}}) \coloneqq \|\widehat{\Ab^{-1}}-\Ab^{-1}\|_F$
  \item $\text{err}_{\text{r}F}(\widehat{\Ab^{-1}}) \coloneqq \frac{\|\widehat{\Ab^{-1}}-\Ab^{-1}\|_F}{\|\Ab^{-1}\|_F} = \frac{\left(\sum\limits_{i=1}^{N} \|\Ab\hat{\bb}_i-\eb_i\|_2^2\right)^{1/2}}{\|\Ab^{-1}\|_F}$.
\end{itemize}

To approximate $\Ab^{-1}$ distributively, each of the $n$ workers are asked to estimate $\tau$-many $\bbh_i$'s in parallel. When using SD, the worst-case runtime by the workers is $\ow(\tau\T_{\text{max}}N^2)$, for $\T_{\text{max}}$ the maximum number of iterations of SD among the workers. If CG is used, each worker needs no more than a total of $N\tau$ CG steps to exactly compute its task, \textit{i.e.} $\ow(\tau N\kappa_2)$ operations; as each instance of \eqref{inv_LS} is expected to converge in $\ow(\kappa_2)$ iterations, which is the worst case runtime \cite{She94,TB97}.

In order to bound $\text{err}_{\text{r}F}(\widehat{\Ab^{-1}})=\frac{\|\widehat{\Ab^{-1}}-\Ab^{-1}\|_F}{\|\Ab^{-1}\|_F}$, we first upper bound the numerator and then lower bound the denominator. Since $\|\Ab^{-1}-\widehat{\Ab^{-1}}\|_F^2=\sum_{i=1}^N\|\Ab^{-1}\eb_i-\hat{\bb}_i\|_2^2$, bounding the numerator reduces to bounding $\|\Ab^{-1}\eb_i-\hat{\bb}_i\|_2^2$ for all $i\in\N_N$. This is straightforward
\begin{align}
\label{upp_bd_numer}
  \|\Ab^{-1}\eb_i-\hat{\bb}_i\|_2^2 &\overset{\Diamond}{\leqslant} 2\left(\|\Ab^{-1}\eb_i\|_2^2+\|\hat{\bb}_i\|_2^2\right)\notag\\
  &\overset{\$}{\leqslant} 2\left(\|\Ab^{-1}\|_2^2\cdot\|\eb_i\|_2^2+\|\hat{\bb}_i\|_2^2\right)\notag\\
  &= 2\left(1/\sigma_{\text{min}}(\Ab)^2+\|\hat{\bb}_i\|_2^2\right)
\end{align}
where in $\Diamond$ we use the fact that $\|\ub-\vb\|_2^2\leqslant2(\|\ub\|_2^2+\|\vb\|_2^2)$, and in $\$$ the submultiplicativity of the $\ell_2$-norm is invoked. For the denominator, by the definition of the Frobenius norm
\begin{equation}
\label{low_bd_denom}
  \|\Ab^{-1}\|_F^2=\sum_{i=1}^N\frac{1}{\sigma_i(\Ab)^2} \geqslant \frac{N}{\sigma_{\text{max}}(\Ab)^2}\ .
\end{equation}
By combining \eqref{upp_bd_numer} and \eqref{low_bd_denom} we get
\begin{align*}
  \text{err}_{\text{r}F}(\widehat{\Ab^{-1}}) &\leqslant \sqrt{2}\left(\frac{N/\sigma_{\text{min}}(\Ab)^2+\sum_{i=1}^N\|\hat{\bb}_i\|_2^2}{N/\sigma_{\text{max}}(\Ab)^2}\right)^{1/2}\\
  &= \sqrt{2}\left(\kappa_2^2+\frac{\sigma_{\text{max}}(\Ab)^2}{N}\cdot\sum_{i=1}^N\|\hat{\bb}_i\|_2^2\right)^{1/2} .
\end{align*}

This is an additive error bound in terms of the problem's condition number, which also shows a dependency on the estimates $\{\bbh_i\}_{i=1}^N$. Propositions \ref{rel_err_bound_prop} and \ref{rel_err_bound_prop_CG} give error bounds when using SD and CG as the subroutine of Algorithm \ref{inv_alg} respectively.

\begin{Prop}
\label{rel_err_bound_prop}
  For $\Ab\in\GL_N(\R)$, we have $\err_{F}(\widehat{\Ab^{-1}})\leqslant \frac{\epsilon\sqrt{N/2}}{\sigma_{\text{min}}(\Ab)^2}$ and $\err_{\mathrm{r}F}(\widehat{\Ab^{-1}})\leqslant \frac{\epsilon\sqrt{N/2}}{\sigma_{\mathrm{min}}(\Ab)}$, when using SD to solve \eqref{inv_LS} with termination criteria $\|\nabla g_i(\bb^{[t]})\|_2\leqslant\epsilon$ for each $i$.
\end{Prop}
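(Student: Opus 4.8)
\emph{Proof proposal.} The plan is to turn the first-order stopping rule $\|\nabla f_i(\bbh_i)\|_2\le\epsilon$ into a per-column bound on $\|\bbh_i-\bb_i\|_2$, where $\bb_i=\Ab^{-1}\eb_i$ is the true $i$-th column of $\Ab^{-1}$, and then assemble the columns into the Frobenius error. First I would compute the gradient of the objective in \eqref{inv_LS}, namely $\nabla f_i(\bb)=2\Ab^\top(\Ab\bb-\eb_i)$, so that the termination criterion reads $\|\Ab^\top(\Ab\bbh_i-\eb_i)\|_2\le\epsilon/2$. Since $\Ab\in\GL_N(\R)$, the equation $\Ab\bb=\eb_i$ has the unique solution $\bb_i=\Ab^{-1}\eb_i$, hence $\Ab\bbh_i-\eb_i=\Ab(\bbh_i-\bb_i)$ and the criterion becomes
\[
\big\|\Ab^\top\Ab(\bbh_i-\bb_i)\big\|_2\le\epsilon/2 .
\]

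Next, $\Ab^\top\Ab$ is symmetric positive definite with smallest eigenvalue $\sigma_{\text{min}}(\Ab)^2$, so $\|(\Ab^\top\Ab)^{-1}\|_2=1/\sigma_{\text{min}}(\Ab)^2$; writing $\bbh_i-\bb_i=(\Ab^\top\Ab)^{-1}\big(\Ab^\top\Ab(\bbh_i-\bb_i)\big)$ and using submultiplicativity of the $\ell_2$-norm gives $\|\bbh_i-\bb_i\|_2\le\frac{\epsilon}{2\sigma_{\text{min}}(\Ab)^2}$ for every $i\in\N_N$. Summing the squares over $i$ yields $\mathrm{err}_{F}(\widehat{\Ab^{-1}})^2=\sum_{i=1}^N\|\bbh_i-\bb_i\|_2^2\le\frac{N\epsilon^2}{4\sigma_{\text{min}}(\Ab)^4}\le\frac{N\epsilon^2}{2\sigma_{\text{min}}(\Ab)^4}$, and taking square roots gives the first inequality $\mathrm{err}_{F}(\widehat{\Ab^{-1}})\le\frac{\epsilon\sqrt{N/2}}{\sigma_{\text{min}}(\Ab)^2}$. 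For the relative error I would divide by $\|\Ab^{-1}\|_F$ and use the sharper bound $\|\Ab^{-1}\|_F\ge\|\Ab^{-1}\|_2=1/\sigma_{\text{min}}(\Ab)$ (rather than the cruder $\|\Ab^{-1}\|_F\ge\sqrt{N}/\sigma_{\text{max}}(\Ab)$ from \eqref{low_bd_denom}, which would only produce a $\kappa_2$-type bound), cancelling one factor of $\sigma_{\text{min}}(\Ab)$ to obtain $\mathrm{err}_{\mathrm{r}F}(\widehat{\Ab^{-1}})\le\frac{\epsilon\sqrt{N/2}}{\sigma_{\text{min}}(\Ab)}$.

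I do not expect a genuine obstacle here; the one idea that does the work is that a small \emph{gradient} of a least-squares objective with nonsingular design matrix forces the iterate to be close to the \emph{true} minimizer — equivalently, that one may invert the normal-equations operator $\Ab^\top\Ab$, which is legitimate precisely because $\Ab\in\GL_N(\R)$. The only points requiring care are the bookkeeping factor of $2$ coming from $\nabla f_i$ (which is what makes the final constant $\sqrt{N/2}$) and, in the relative bound, using $\|\Ab^{-1}\|_F\ge\|\Ab^{-1}\|_2$ so that the dependence collapses to $\sigma_{\text{min}}(\Ab)^{-1}$; this direct route is cleaner than substituting the stopping rule into the generic additive bound on $\mathrm{err}_{\mathrm{r}F}$ derived just before the statement.
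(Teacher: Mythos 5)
Your proof is correct, and it reaches the stated bounds by a route that is parallel to, but not the same as, the paper's. The paper never computes $\nabla f_i$ explicitly: it invokes the strong-convexity suboptimality gap $f_i(\bb)-f_i(\bb^{\star}_{ls})\leqslant\frac{1}{2\mu}\|\nabla f_i(\bb)\|_2^2$ with $\mu=\sigma_{\text{min}}(\Ab)^2$ to convert the stopping rule into the residual bound $\|\Ab\hat{\bb}_i-\eb_i\|_2^2\leqslant\frac{\epsilon^2}{2\sigma_{\text{min}}(\Ab)^2}$, and then writes $\Ab^{-1}\eb_i-\hat{\bb}_i=\Ab^{-1}(\eb_i-\Ab\hat{\bb}_i)$ and multiplies by $\|\Ab^{-1}\|_2$. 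You instead work purely linear-algebraically: the stopping rule $\|2\Ab^\top(\Ab\hat{\bb}_i-\eb_i)\|_2\leqslant\epsilon$ is read as $\|\Ab^\top\Ab(\hat{\bb}_i-\bb_i)\|_2\leqslant\epsilon/2$, and inverting the normal-equations operator gives the per-column bound directly. Both arguments are sound; yours needs no convex-optimization lemma and, because it keeps the factor of $2$ from $\nabla f_i=2\Ab^\top(\Ab\bb-\eb_i)$ that the paper's conservative choice of $\mu$ (the Hessian is $2\Ab^\top\Ab$, so the true strong-convexity constant is $2\sigma_{\text{min}}(\Ab)^2$) throws away, it actually yields the slightly sharper constants $\frac{\epsilon\sqrt{N}}{2\sigma_{\text{min}}(\Ab)^2}$ and $\frac{\epsilon\sqrt{N}}{2\sigma_{\text{min}}(\Ab)}$ before relaxing to the stated $\sqrt{N/2}$ form. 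The paper's route has the advantage of generalizing verbatim to any strongly convex objective where the Hessian is not constant. Your treatment of the relative error coincides in substance with the paper's final step, which also reduces to $\|\Ab^{-1}\|_2\leqslant\|\Ab^{-1}\|_F$; you are right that the cruder lower bound \eqref{low_bd_denom} would only give a $\kappa_2$-dependent estimate.
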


\begin{proof}
Recall that for a strongly-convex function with strong-convexity parameter $\mu$, we have the following optimization gap \cite[Section 9.1.2]{BV04}
\begin{equation}
\label{opt_gap}
  g_i(\bb)-g_i(\bb_{ls}^{\star})\leqslant\frac{1}{2\mu}\cdot\|\nabla g_i(\bb)\|_2^2 \ .
\end{equation}
For $\Ab\in\GL_N(\R)$ in \eqref{inv_LS}, the constant is $\mu=\sigma_{\text{min}}(\Ab)^2$. By fixing $\epsilon=\sqrt{2\sigma_{\text{min}}(\Ab)^2\eta}$, we have $\eta=\frac{1}{2}\cdot\left(\frac{\epsilon}{\sigma_{\text{min}}(\Ab)}\right)^2$. Thus, by \eqref{opt_gap} and our termination criterion:
$$ \|\nabla g_i(\bb)\|_2\leqslant\sqrt{2\sigma_{\text{min}}(\Ab)^2\eta} \quad \implies \quad g_i(\bb)-g_i(\bb_{ls}^{\star})\leqslant\eta\ , $$
so when solving \eqref{inv_LS} we get
$$ g_i(\bb)-g_i(\bb_{ls}^{\star}) = g_i(\bb)-0 = \|\Ab\hat{\bb}_i-\eb_i\|_2^2\ , $$
hence
\begin{equation}
\label{summand_obj_bound}
  \|\Ab\hat{\bb}_i-\eb_i\|_2^2 \leqslant \frac{1}{2}\cdot\left(\frac{\epsilon}{\sigma_{\text{min}}(\Ab)}\right)^2
\end{equation}
for all $i\in\N_N$. We want an upper bound for each summand $\|\Ab^{-1}\eb_i-\hat{\bb}_i\|_2^2$ of the numerator of $\text{err}_{\text{r}F}(\widehat{\Ab^{-1}})^2$:
\begin{align}
  \|\Ab^{-1}\eb_i-\hat{\bb}_i\|_2^2 &= \|\Ab^{-1}(\eb_i-\Ab\hat{\bb}_i)\|_2^2\notag\\
  &\leqslant \|\Ab^{-1}\|_2^2\cdot\|\eb_i-\Ab\hat{\bb}_i\|_2^2\notag\\
  &\overset{\sharp}{\leqslant} \|\Ab^{-1}\|_2^2\cdot\frac{1}{2} \cdot \left(\frac{\epsilon}{\sigma_{\text{min}}(\Ab)}\right)^2\label{deriv_rF_bd_1}\\
  &=\frac{\epsilon^2}{2\sigma_{\text{min}}(\Ab)^4}\label{deriv_rF_bd}
\end{align}
where $\sharp$ follows from \eqref{summand_obj_bound}, thus $\text{err}_{F}(\widehat{\Ab^{-1}})^2\leqslant\frac{N\epsilon^2}{2\sigma_{\text{min}}(\Ab)^4}$. Substituting \eqref{deriv_rF_bd_1} into the definition of $\text{err}_{\text{r}F}(\widehat{\Ab^{-1}})$ gives us
$$ \text{err}_{\text{r}F}(\widehat{\Ab^{-1}})^2 \leqslant \frac{\|\Ab^{-1}\|_2^2}{\|\Ab^{-1}\|_F^2} \cdot\frac{N}{2} \cdot \left(\frac{\epsilon}{\sigma_{\text{min}}(\Ab)}\right)^2 \overset{\ddagger}{\leqslant}\frac{N\epsilon^2/2}{\sigma_{\mathrm{min}}(\Ab)^2} $$
where $\ddagger$ follows from the fact that $\|\Ab^{-1}\|_2^2\leqslant\|\Ab^{-1}\|_F^2$.
\end{proof}

In the experiments of Subsection \ref{exper_sec}, we verify that Proposition \ref{rel_err_bound_prop} holds for Gaussian random matrices. The dependence on $1/\sigma_{\text{min}}(\Ab)$ is an artifact of using gradient methods to solve the underlying problems \eqref{inv_LS}, since the error will be multiplied by $\|\Ab^{-1}\|_2^2$. In theory, this can be annihilated if one runs the algorithm on $p\Ab$ for $p\approx1/\sigma_{\text{min}}(\Ab)$, followed by multiplication of the final result by $p$. This is a way of preconditioning SD. In practice, the scalar $p$ should not be selected to be much larger than $1/\sigma_{\text{min}}(\Ab)$, as it could result in $\widehat{\Ab^{-1}}\approx\bold{0}_{N\times N}$.

\begin{Prop}
\label{rel_err_bound_prop_CG}
  Assume Algorithm \ref{inv_alg} uses CG to solve \eqref{inv_LS}. Then, in $\ow\left(N\sqrt{\kappa_2}\ln(1/\epsilon)\right)$ iterations, we have $\err_{F}(\widehat{\Ab^{-1}})\leqslant N\epsilon$. Moreover, if $\Ab^\top\Ab$ has $\tilde{N}$ distinct eigenvalues, it converges in at most $\tilde{N}N$ steps.
\end{Prop}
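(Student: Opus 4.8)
The plan is to mimic the structure of the proof of Proposition \ref{rel_err_bound_prop}: first control the per-column residual $\|\Ab\bbh_i - \eb_i\|_2$ using the standard convergence rate of CG, then propagate this bound through the identity $\|\Ab^{-1}\eb_i - \bbh_i\|_2 = \|\Ab^{-1}(\eb_i - \Ab\bbh_i)\|_2 \leqslant \|\Ab^{-1}\|_2\,\|\eb_i - \Ab\bbh_i\|_2$, and finally sum over $i\in\N_N$ to get $\mathrm{err}_F(\widehat{\Ab^{-1}})$. The key difference from the SD case is the source of the residual bound: instead of the strong-convexity optimization gap \eqref{opt_gap}, I would invoke the classical CG error estimate in the $\Ab^\top\Ab$-energy norm, namely that after $t$ iterations the error $\eb^{[t]} = \bb^\star_{ls} - \bb^{[t]}$ satisfies $\|\eb^{[t]}\|_{\Ab^\top\Ab} \leqslant 2\left(\frac{\sqrt{\kappa}-1}{\sqrt{\kappa}+1}\right)^t \|\eb^{[0]}\|_{\Ab^\top\Ab}$, where $\kappa = \kappa_2(\Ab^\top\Ab) = \kappa_2^2$ (since CG here is applied to the normal equations $\Ab^\top\Ab\,\bb = \Ab^\top\eb_i$). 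Noting that $\|\eb^{[t]}\|_{\Ab^\top\Ab}^2 = \|\Ab\eb^{[t]}\|_2^2 = \|\Ab\bb^{[t]} - \eb_i\|_2^2$ (using $\Ab\bb^\star_{ls} = \eb_i$ since $\Ab$ is invertible so the least-squares residual is zero), this directly bounds the quantity we need. Solving $2\left(\frac{\sqrt{\kappa_2^2}-1}{\sqrt{\kappa_2^2}+1}\right)^t \leqslant \epsilon$ for $t$ gives, via $\ln\frac{\sqrt{\kappa}+1}{\sqrt{\kappa}-1} \asymp \frac{2}{\sqrt{\kappa}}$ for large $\kappa$, the iteration count $t = \ow(\sqrt{\kappa_2}\ln(1/\epsilon))$ per column; doing this for all $N$ columns yields the stated $\ow(N\sqrt{\kappa_2}\ln(1/\epsilon))$ total, and accumulates to $\|\Ab\bbh_i - \eb_i\|_2 \leqslant \epsilon$, hence $\mathrm{err}_F(\widehat{\Ab^{-1}})^2 = \sum_{i=1}^N \|\Ab^{-1}\eb_i - \bbh_i\|_2^2 \leqslant \|\Ab^{-1}\|_2^2 \sum_{i=1}^N \epsilon^2 = N\|\Ab^{-1}\|_2^2\epsilon^2$; absorbing the $\|\Ab^{-1}\|_2$ factor (or rescaling $\epsilon$, as in the remark following Proposition \ref{rel_err_bound_prop}) gives $\mathrm{err}_F(\widehat{\Ab^{-1}}) \leqslant N\epsilon$.

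For the second assertion — exact convergence of CG in at most $\tilde N N$ steps when $\Ab^\top\Ab$ has $\tilde N$ distinct eigenvalues — I would cite the well-known finite-termination property of CG: applied to a symmetric positive-definite system whose matrix has $\tilde N$ distinct eigenvalues, CG converges exactly (in exact arithmetic) in at most $\tilde N$ iterations, because the residual polynomial can be taken to be the degree-$\tilde N$ polynomial vanishing at all eigenvalues. Since each of the $N$ instances of \eqref{inv_LS} runs CG on $\Ab^\top\Ab$, each terminates in at most $\tilde N$ steps, for a total of at most $\tilde N N$ steps to recover $\widehat{\Ab^{-1}}$ exactly.

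The main obstacle, or at least the main point requiring care, is the accounting of the $\|\Ab^{-1}\|_2$ factor and the precise interpretation of ``$\mathrm{err}_F \leqslant N\epsilon$'': the bound that falls out naturally is $\sqrt{N}\,\|\Ab^{-1}\|_2\,\epsilon$, so the statement $N\epsilon$ presumes either a rescaling/preconditioning of $\Ab$ so that $\sigma_{\min}(\Ab)\geqslant 1/\sqrt{N}$ (cf. the preconditioning remark in the text) or a reparametrization of $\epsilon$ that folds in the conditioning; I would make this dependence explicit rather than sweep it under the rug. A secondary subtlety is being clean about which system CG is actually applied to — the residual $\|\Ab\bbh_i-\eb_i\|_2$ is the $\Ab^\top\Ab$-norm of the solution error, so the conditioning governing the rate is $\kappa_2(\Ab^\top\Ab) = \kappa_2^2$, which is exactly why a $\sqrt{\kappa_2^2} = \kappa_2$... wait — the stated rate is $\sqrt{\kappa_2}$, so in fact the intended reading must be that the relevant condition number is $\kappa_2$ itself (e.g. CG applied directly to $\Ab$ when $\Ab$ is symmetric positive-definite, or a convention in which $\kappa_2$ already denotes the condition number of the system being solved); I would reconcile the notation with the earlier statement that each CG instance converges in $\ow(\kappa_2)$ iterations, so the $\sqrt{\kappa_2}$ here is the accelerated rate for the effective system, and present the argument consistently with whichever convention the paper has fixed.
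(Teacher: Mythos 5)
Your proposal is correct and follows essentially the same route as the paper: invoke the standard CG convergence rate to get an $\epsilon$-accurate column in $\ow(\sqrt{\kappa_2}\ln(1/\epsilon))$ iterations, sum over the $N$ columns for the Frobenius bound, and cite the finite-termination property of CG (at most $\tilde{N}$ steps per column when $\Ab^\top\Ab$ has $\tilde{N}$ distinct eigenvalues) for the second claim. The paper's proof is in fact terser than yours --- it simply cites Shewchuk and Bubeck for the per-column rate and asserts that the total Frobenius error is ``at most $N\epsilon$,'' and cites Trefethen--Bau for finite termination --- so your unpacking of the energy-norm estimate $\|\eb^{[t]}\|_{\Ab^\top\Ab}=\|\Ab\bb^{[t]}-\eb_i\|_2$ is a strictly more explicit version of the same argument. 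The two loose ends you flag are real and are not resolved in the paper either: the natural bound carries a $\|\Ab^{-1}\|_2$ factor (giving $\sqrt{N}\,\|\Ab^{-1}\|_2\,\epsilon$ rather than $N\epsilon$), and since CG is applied to the normal equations the governing condition number is $\kappa_2(\Ab^\top\Ab)=\kappa_2^2$, so the accelerated rate should read $\ow(\kappa_2\ln(1/\epsilon))$ per column --- which is actually what the paper states elsewhere in Section IV when it quotes $\ow(\kappa_2)$ iterations per instance. Making both dependences explicit, as you propose, would strengthen the statement rather than contradict it.
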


\begin{proof}
By \cite[Section 10]{She94} and \cite[Section 2]{Bub15}, we know that for each subroutine \eqref{inv_LS} of Algorithm \ref{pinv_alg}, CG requires at most $\ow(\sqrt{\kappa_2}\ln(1/\epsilon))$ iterations in order to attain an $\epsilon$-optimal point, for each $\hat{\bb}_i$. Hence, considering all approximate columns $\{\hat{\bb}_i\}_{i=1}^N$, we conclude that the total error in terms of the Frobenius norm of $\widehat{\Ab^{-1}}$, is at most $N\epsilon$.

Recall that in order to solve \eqref{OLS} with the CG method in the case where $\Ab$ is neither symmetric, positive-definite, nor square, we apply the CG iteration to the normal equations: $\Ab^\top\Ab\yb=\Ab^\top\theta$.
This follows by setting the derivative of \eqref{OLS} to zero. In our scenario, we are assuming that $\Ab\in\GL_N(\R)$, hence $\Ab^\top\Ab$ is full-rank and symmetric, thus CG in its simplest form can be used to solve the minimization problems of Algorithm \ref{inv_alg}. By \cite[Theorem 38.4]{TB97}, it follows that each instance of \eqref{inv_LS} converges in at most $\tilde{N}$ steps.
\end{proof}

Even though Proposition \ref{rel_err_bound_prop_CG} guarantees convergence in at most $\tilde{N}N$ steps, it does not assume finite floating-point precision. Therefore, this does not hold in practical settings. Our experiments though show that after significantly less steps, we achieve approximations of negligible error, which is sufficient for ML and FL applications.

\subsection{Numerical Experiments}
\label{exper_sec}

The accuracy of the proposed algorithm was tested on randomly generated matrices, using both SD and CG for the subroutine optimization problems. The depicted results are averages of 20 runs, with termination criteria $\|\nabla g_i(\bb^{[t]})\|_2\leqslant \epsilon$ for SD and $\|\bb_i^{[t]}-\bb_i^{[t-1]}\|_2\leqslant \epsilon$ for CG, for the given $\epsilon$ accuracy parameters. We considered $\Ab\in \R^{100\times 100}$. The error subscripts represent $\mathscr{A}=\{\ell_2,{F},\text{r}F\}$, $\mathscr{N}=\{\ell_2,F\}$. We note that significantly fewer iterations took place when CG was used for the same $\epsilon$, though this depends heavily on the choice of the step-size. The errors observed in the case of CG, are due to floating-point arithmetic. Therefore, as expected; there is a trade-off between accuracy and speed when using SD vs. CG.

\begin{center}
\begin{tabular}{ |p{.5cm}||p{1.05cm}|p{1.1cm}|p{1.1cm}|p{1.1cm}|p{1.1cm}| }
\hline
\multicolumn{6}{|c|}{Average $\widehat{\Ab^{-1}}$ errors, for $\Ab\sim50\cdot \mathcal{N}(0,1)$ --- SD} \\
\hline
$\epsilon$ & $10^{-1}$ & $10^{-2}$ & $10^{-3}$ & $10^{-4}$ & $10^{-5}$ \\
\hline
$\text{err}_{\mathscr{A}}$ & {\small$\ow(10^{-2})$} & {\small$\ow(10^{-5})$} & {\small$\ow(10^{-7})$} & {\small$\ow(10^{-9})$} & {\small$\ow(10^{-12})$} \\
\hline
\end{tabular}
\end{center}

\begin{center}
\begin{tabular}{ |p{.5cm}||p{1.05cm}|p{1.05cm}|p{1.05cm}|p{1.1cm}|p{1.1cm}| }
\hline
\multicolumn{6}{|c|}{{\small Average $\widehat{\Ab^{-1}}$ errors, for $\Ab\sim50\cdot \mathcal{N}(0,1)$ --- CG}} \\
\hline
$\epsilon$ & $10^{-3}$ & $10^{-4}$ & $10^{-5}$ & $10^{-6}$ & $10^{-7}$ \\
\hline
$\text{err}_{\mathscr{N}}$ & {\small$\ow(10^{-3})$} & {\small$\ow(10^{-5})$} & {\small$\ow(10^{-8})$} & {\small$\ow(10^{-11})$} & {\small$\ow(10^{-12})$} \\
$\text{err}_{\text{r}F}$ & {\small$\ow(10^{-3})$} & {\small$\ow(10^{-5})$} & {\small$\ow(10^{-7})$} & {\small$\ow(10^{-10})$} & {\small$\ow(10^{-12})$} \\
\hline
\end{tabular}
\end{center}

We utilized Algorithm \ref{inv_alg} in Newton's method, for classifying images of four and nine from MNIST, by solving a regularized logistic regression minimization problem. For Algorithm \ref{inv_alg}, we used CG with a fixed number of iteration per column estimation. It is clear from Figure \ref{err_MNIST_CG} that we require no more than 18 iterations per column estimate, for $N=785$, to attain the optimal classification rate. With more than 18 CG iterations, the same classification rate was obtained.

\begin{figure}[h]
  \centering
    \includegraphics[scale=.15]{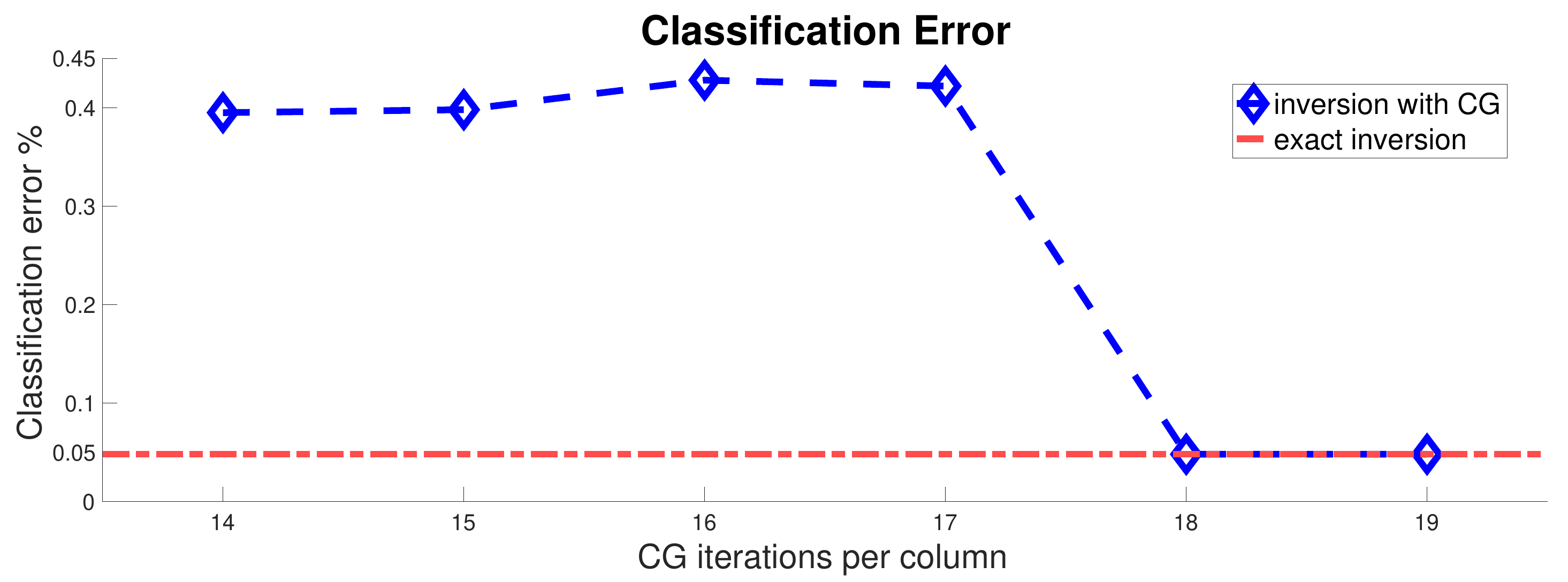}
    \caption{MNIST classification error, where Algorithm \ref{inv_alg} is used in Newton's method. In red, we depict the error when exact inversion was used.}
  \label{err_MNIST_CG}
\end{figure}

\vspace{-1mm}
\section{Secure Coded Matrix Inversion}
\label{FCMI_sec}

In this section, we describe the proposed CMIM (also presented in \cite{CPH22a}) which makes Algorithm \ref{inv_alg} resilient to stragglers, and show how it can be applied to the relaxed FL scenario described in the introduction. The CMIM workflow is depicted in Figure \ref{CMIS_fig}.

Our scheme can be broken up in to four phases: $(\mathrm{a})$ the coordinator shares elements $\beta,\Hh$ of a finite field with all the clients, $(\mathrm{b})$ the clients each generate a \textit{pseudorandom permutation} ($\prp$) $\sigma_\iota$, encrypt their corresponding data block $\Ab_\iota$ through a matrix polynomial $f_\iota(\x)$, and broadcast $\{f_\iota(\x),\sigma_\iota\}$ to the other clients, $(\mathrm{c})$ the clients recover $\Ab$, compute and encode their assigned task $\Wb_\iota$, which is communicated to the coordinator, $(\mathrm{d})$ the coordinator decodes once sufficiently many workers respond. It is also possible that $\beta,\Hh$ are determined collectively by the clients, or by a single client, which makes the data sharing secure against a curious and dishonest coordinator.

\begin{figure}[h]
  \centering
    \includegraphics[scale=.17]{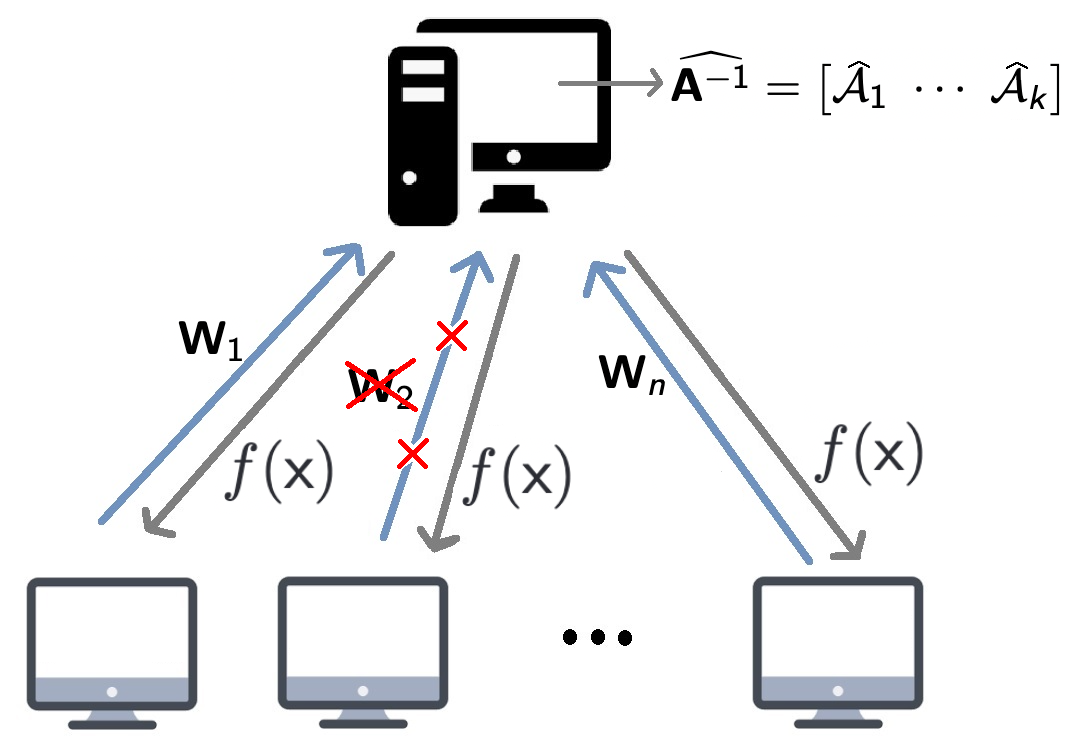}
    \caption{Algorithmic workflow of the CMIM, as proposed in \cite{CPH22a}. The master shares $f(\x)$, an encoding analogous to \eqref{lagr_pol_matr}, along with $\beta,\{\eta_j^{-1}\}_{j=1}^k$. The workers then recover $\Ab$, compute their assigned tasks, and encode them according to $\Gb$. Once $k$ encodings $\Wb_\iota$ are sent back, $\widehat{\Ab^{-1}}$ can be recovered.}
  \label{CMIS_fig}
\end{figure}

In our proposed approach, we assume there is a trustworthy coordinator who shares certain parameters to each of the $k$ clients which constitute the network; \textit{e.g}. hospitals in a health care network, each of which are comprised of multiple servers. What we present works for the case where the clients have local datasets of different sizes, $\{N_i\}_{i=1}^k$. This would result in the encoding functions $f_\iota(\x)$ having different degrees, or their matrix coefficients being of a different size. In our setting we assume the workers are \textit{homogeneous}, \textit{i.e.} they have the same computational power. Therefore, equal computational loads are assigned to each of them. In order to keep the notation and size of the communication loads consistent, we assume w.l.o.g. that $\Ab_\iota\in\R^{N\times T}$ for all $\iota\in\N_k$. If this is not the case, before $f_\iota(\x)$ are determined, the clients could perform a data exchange phase (\textit{e.g}. \cite{XARWZ22}), so that $N_i=N_j$ for all $i\neq j$. By this, it follows that the number of blocks does not have to be equal to the number of clients. The example we describe, is simply a motivation. A flowchart of our approach is presented in Figure \ref{CMI_floatchart}.

Moreover, in the case where $M>N$; for $M=\sum_{i=1}^kN_i$, we can select a subset of features and/or samples, so that the resulting data matrix we consider is square. This can be interpreted as using the surrogate $\Abt=\Sb\Ab$, where $\Sb\in\R^{N\times M}$ is an appropriate (sparse) \textit{sketching} matrix for matrix inversion \cite{Gow16}, which the workers agree on.

\begin{figure}[h]
  \centering
    \includegraphics[scale=.17]{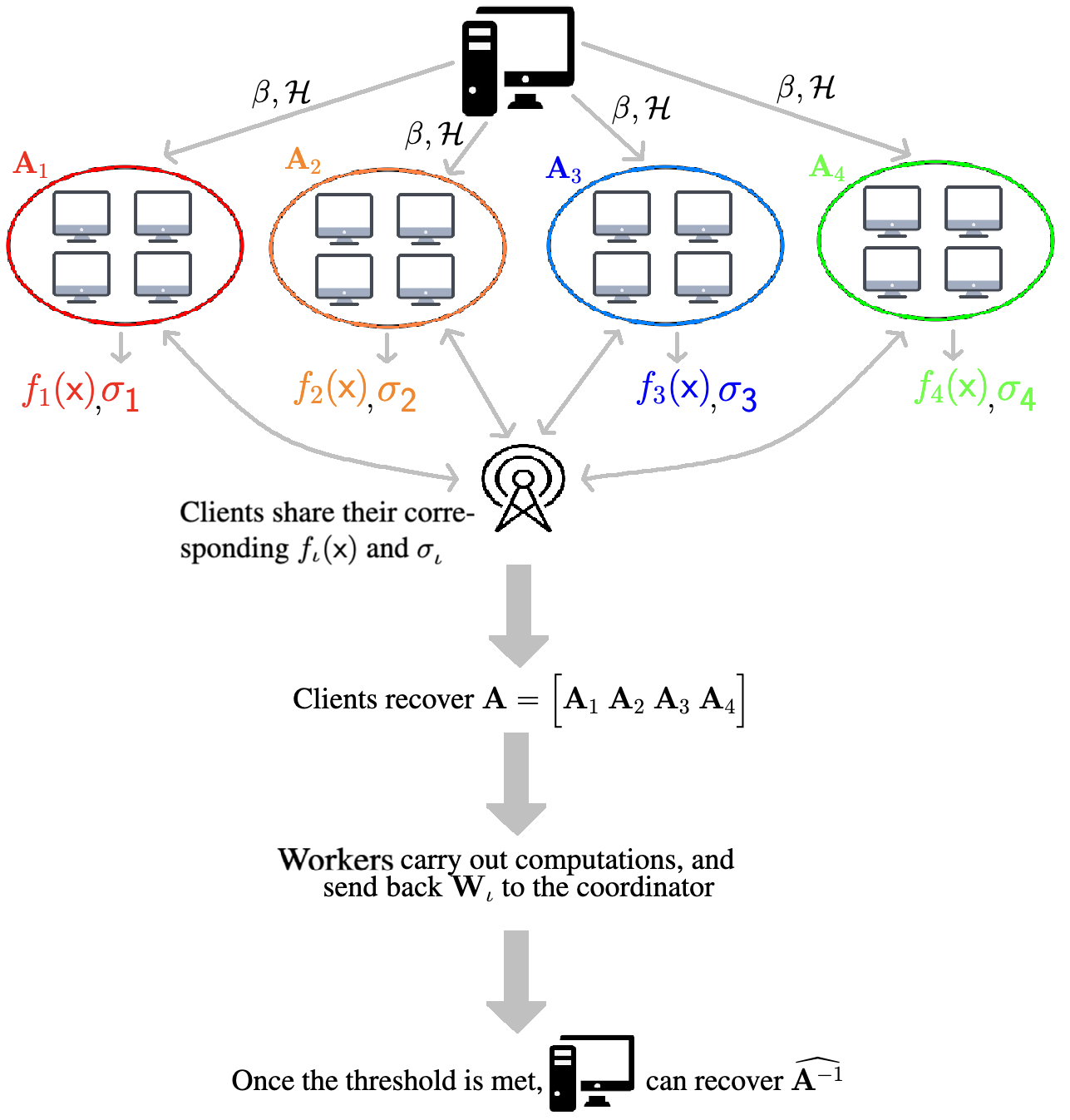}
    \caption{Flowchart of our proposal, where $k=n_i=4$ for all $i\in\N_4$.}
  \label{CMI_floatchart}
\end{figure}

First, in \ref{know_of_A_subs} we argue why all of $\Ab$ needs to be known by each of the workers, in order to recover entries or columns of its inverse. Then, in \ref{enc_A_subs} we focus on phases $(\mathrm{a})$ and $(\mathrm{b})$, where we utilize Lagrange interpolation to securely share $\Ab$ among the workers. We discuss the computation tasks the workers are requested to compute, which are blocks of $\widehat{\Ab^{-1}}$; and collectively correspond to the subroutine problems of Algorithm \ref{inv_alg}. In \ref{comp_subs} we focus on $(\mathrm{c})$ and $(\mathrm{d})$, where we show how the workers encode their computations, and describe the coordinator's decoding step. Optimality of $\brs$ generator matrices in terms of the encoded communication loads is established in \ref{opt_BRS_subs}.

When assuming no floating-point errors, our approach introduces no numerical nor approximation errors. The errors are a consequence of using iterative solvers to estimate \eqref{inv_LS}, which we utilize to linearly separate the computations. Therefore, if the workers can recover the optimal solutions to the underlying minimization problems, our scheme would be \textit{exact}.

\subsection{Knowledge of $\Ab$ is necessary}
\label{know_of_A_subs}

A bottleneck when computing the inverse of a matrix; or estimating its columns, is that the entire matrix needs to be known. A single change in the matrix's entries may result in a non-singular matrix, which conveys how sensitive Gaussian elimination is. Such problems are extensively studied in conditioning and stability of numerical analysis \cite{TB97}, and in perturbation theory. This is not a focus of our work.

In the case where only one column is not known, one can determine the subspace in which the missing column lies, but without the knowledge of at least one entry of that column, it would be impossible to recover that column. Even with such an approach or a matrix completion algorithm, the entire $\Ab$ is determined before we proceed to inverting $\Ab$; or performing linear regression to approximate $\Ab\bb=\eb_i$ as in \eqref{inv_LS}.

A similar issue, relating to our set up, is the case where one of the blocks is different. This could lead to drastic miscalculations. In the following example, we consider $n=k=2$ and $N=4$, where the second worker sends two different blocks, which are indicated by a different color and font:
\begin{equation*}
  {\small \qvec{A}_1 = \begin{pmatrix} 6 & 2 & \textit{\blue2} & \textit{\blue-5}\\ 0 & -1 & \textit{\blue2} & \textit{\blue0}\\ -5 & 6 & \textit{\blue-1} & \textit{\blue-3}\\ 5 & -3 & \textit{\blue-4} & \textit{\blue3} \end{pmatrix} \quad \qvec{A}_2 = \begin{pmatrix} 6 & 2 & \mathfrak{\purple-1} & \mathfrak{\purple-3}\\ 0 & -1 & \mathfrak{\purple5} & \mathfrak{\purple6}\\ -5 & 6 & \mathfrak{\purple3} & \mathfrak{\purple-2}\\ 5 & -3 & \mathfrak{\purple1} & \mathfrak{\purple6} \end{pmatrix}}\ .
\end{equation*}
It follows that $\|\qvec{A}_1^{-1}\|_F\approx90.45$, $\|\qvec{A}_2^{-1}\|_F\approx1$, and $\|\qvec{A}_1^{-1}-\qvec{A}_2^{-1}\|_0=16$; \textit{i.e.} no entries of $\qvec{A}_1^{-1}$ and $\qvec{A}_2^{-1}$ are equal.

Furthermore, by the data processing inequality \cite[Corollary pg. 35]{CT06}, the above imply that no less than $N^2$ information symbols can be known by each worker, while hoping to approximate a column of $\Ab^{-1}$. Hence, all clients need full knowledge of each others information, and cannot communicate less than $NT$ symbols to each other. This is a consequence of the fact that a dense vector is not recoverable from underdetermined linear measurements. They can however send an encoded version of their respective block $\Ab_{\iota}\in\R^{N\times T}$ to the other clients consisting of $NT$ symbols, determined by a modified Lagrange polynomial, which guarantees security against eavesdroppers.

Similar cryptographic protocols date back to the SSS algorithm \cite{Sha79,Bla79}, which is also based on $\rs$ codes. This idea has been extensively exploited in LCC \cite{YSRKSA18}, yet differs from our approach.

\vspace{-2mm}
\subsection{Phases $(\mathrm{a}),(\mathrm{b})$ --- Data Encryption and Sharing}
\label{enc_A_subs}

Let $k,\gamma\in\Z_+$ be factors of $N$ and $T$ respectively, so that $T=\frac{N}{k}$ and $\Gamma=\frac{T}{\gamma}$.\footnote{If $\gamma\nmid T$, append $\bold{0}_{T\times 1}$ to the end of the first $\tilde{\gamma}=T(\bmod\gamma)$ blocks which are each comprised of $\tilde{\Gamma}=\floor{\frac{T}{\gamma}}$ columns of $\Ab_\iota$, while the remaining $\gamma-\tilde{\gamma}$ blocks are comprised of $\tilde{\Gamma}+1$ columns. Now, each block is of size $T\times(\tilde{\Gamma}+1)$.} The coordinator constructs a set of distinct \textit{interpolation points} $\Bcal=\{\beta_j\}_{j=1}^n\subsetneq \F_q^{\times}$, for $q>n\geqslant\gamma$.\footnote{For the encodings of the $\Ab_\iota$'s, $\gamma$ points suffice, and we only need to require $q>\gamma$. We select $\Bcal$ of cardinality $n$ and require $q>n\geqslant\gamma$, in order to reuse $\Bcal$ in our CCM.} To construct this set, it suffices to sample $\beta\in\F_q^{\times}$; any one of the $\phi(q-1)$ primitive roots of $\F_q$ ($\phi$ is Euler's totient function), which is a generator of the multiplicative group $(\F_q^{\times},\cdot)$, and define each point as $\beta_j=\beta^j$. Then, a random multiset $\Hh=\left\{\eta_j\in\F_q^\times\mid\forall j\in\N_\gamma\right\}$ of size $\gamma$ is generated, \textit{i.e.} repetitions in $\Hh$ are allowed, which will be used to remove the structure of the Lagrange coefficients, as the adversaries could exploit their structure to reveal $\beta$.

The element $\beta$ and set $\Hh$, are broadcasted securely to all the workers through a public-key cryptosystem, \textit{e.g}. $\rsa$ or McEliece. Matrices $\Ab_\iota$ are partitioned into $\gamma$ blocks
\begin{equation}
\label{parts_A_iota}
  \Ab_\iota=\Big[\Ab_\iota^{1} \ \cdots \ \Ab_\iota^\gamma\Big] \quad \text{ where } \Ab_\iota^i\in\R^{N\times\Gamma},\ \forall i\in\N_\gamma,
\end{equation}
and each client generates a $\prp$ $\sigma_\iota\in\ S_\gamma$. The blocks $\{\Ab_\iota\}_{\iota=1}^k$ are encrypted locally through the univariate polynomials
\begin{equation}
\label{lagr_pol_matr}
  f_\iota(\x) = \sum\limits_{j=1}^\gamma\Ab_\iota^j\cdot\eta_{\sigma_\iota(j)}\left(\prod\limits_{l\neq j}\frac{\x-\beta_l}{\beta_j-\beta_l}\right)
\end{equation}
for which $f_\iota(\beta_j)=\eta_{\sigma_\iota(j)}\Ab_\iota^j$.

The clients securely broadcast $\{f_\iota(\x),\sigma_\iota\}$ to each other, and their servers can then recover all $\Ab_\iota$'s as follows:
\begin{equation}
\label{decrypt_blocks}
  \Ab_\iota=\Big[\eta_{\sigma_\iota(1)}^{-1}f_\iota(\beta_1) \ \cdots \ \eta_{\sigma_\iota(\gamma)}^{-1}f_\iota(\beta_\gamma)\Big]\in\R^{N\times T}.
\end{equation}
The coefficients of $f_\iota(\x)$ are comprised of $N\Gamma$ symbols, thus, each polynomial consists of a total of $NT$ symbols, which is the minimum number of symbols needed to be communicated. The $\prp$ $\sigma_\iota$ is generated locally by the clients, to ensure that each $f_\iota(\x)$ differs by more than just the matrix partitions.

We assume Kerckhoffs' principle, which states that everyone has knowledge of the system, including the messages $f_\iota(\x)$. For the proposed CMIM, as long as $\{\beta,\Hh\}$ and $\sigma_\iota$ are securely communicated, even if $f_\iota(\x)$ is revealed, the block $\Ab_\iota$ is secure against polynomial-bounded adversaries (this is the security level assumed by the cryptosystems used for the communication).

\begin{Prop}
\label{prop_sec_cryptosystem}
The encryptions of $\Ab_\iota$ through $f_\iota(\x)$, are as secure against eavesdroppers as the public-key cryptosystems which are used when broadcasting $\{\beta,\Hh\}$ and $\sigma_\iota$. To recover $\Ab_\iota$, an adversary needs to intercept both communications, and break both cryptosystems.
\end{Prop}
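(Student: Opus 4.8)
The plan is to reduce the secrecy of $\Ab_\iota$ to the secrecy of the two items transmitted over the public-key channels, namely $\{\beta,\Hh\}$ and the permutation $\sigma_\iota$. I would argue by showing that an adversary who does not break both cryptosystems obtains no more than a computationally negligible advantage in recovering $\Ab_\iota$, and conversely that breaking both suffices. Concretely, I would first fix the adversary's view under Kerckhoffs' principle: the eavesdropper sees the ciphertext of $\{\beta,\Hh\}$, the ciphertext of $\sigma_\iota$, and the plaintext polynomial $f_\iota(\x)$ from \eqref{lagr_pol_matr}. The goal is to show that the conditional distribution of $\Ab_\iota$ given only $f_\iota(\x)$ (i.e.\ without $\beta$ or $\sigma_\iota$) carries essentially no usable information about $\Ab_\iota$.

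The key steps, in order, would be: $(1)$ \emph{Sufficiency of breaking both.} If the adversary recovers $\beta$, she reconstructs the interpolation points $\beta_j=\beta^j$; if she also recovers $\sigma_\iota$ and has $\Hh$ (which comes bundled with $\beta$ in the same transmission), she evaluates $f_\iota(\beta_j)$ and applies \eqref{decrypt_blocks} to obtain $\Ab_\iota$ exactly. This is just the legitimate decryption procedure, so one direction is immediate. $(2)$ \emph{Necessity — reduction argument.} Suppose some PPT adversary $\mathcal{A}$ recovers $\Ab_\iota$ (with non-negligible probability) from the adversarial view \emph{without} breaking one of the two cryptosystems. I would build a distinguisher/inverter against that cryptosystem: given a challenge ciphertext for $\{\beta,\Hh\}$ (resp.\ $\sigma_\iota$), simulate the rest of the view honestly — sampling the unchallenged secret ourselves, forming the honestly-distributed $f_\iota(\x)$ — feed it to $\mathcal{A}$, and use $\mathcal{A}$'s output $\Ab_\iota$ together with the known $f_\iota(\x)$ to solve for the challenged secret (e.g.\ given $\Ab_\iota$ and $f_\iota(\x)$ and $\Hh$, the nonzero scalars and evaluation points are heavily constrained; given $\Ab_\iota$ and $\beta,\Hh$, the permutation $\sigma_\iota$ is pinned down by matching $f_\iota(\beta_j)$ against $\eta_{\sigma_\iota(j)}\Ab_\iota^j$). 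This contradicts the assumed security of the underlying public-key scheme. $(3)$ \emph{Role of $\Hh$ and the PRP.} I would remark that the random multiset $\Hh$ masks the otherwise-fixed Lagrange coefficients $\prod_{l\neq j}(\x-\beta_l)/(\beta_j-\beta_l)$, so that the structure of $f_\iota(\x)$ cannot leak $\beta$, and that $\sigma_\iota$ being an independently drawn PRP ensures distinct clients' encodings are not related by a known bijection of blocks — both points already flagged in the text before the statement.

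The main obstacle I anticipate is step $(2)$: making the reduction clean requires that the plaintext $f_\iota(\x)$ is \emph{simulatable} in the reduction without knowledge of the challenged secret. For the $\sigma_\iota$-challenge this is fine — the reduction knows $\beta,\Hh$ and $\Ab_\iota$ is the quantity we are trying to protect, not something the reduction needs. For the $\{\beta,\Hh\}$-challenge one must be careful: $f_\iota(\x)$ as written \emph{depends} on $\beta$ and $\Hh$, so the reduction cannot form it without the secret. The fix is to recognize that $f_\iota$ is itself just data the honest client sends after choosing $\beta,\Hh,\sigma_\iota$, so the reduction should play the role of that client — i.e.\ the challenge is phrased as ``encrypt these $\beta,\Hh$ for me,'' the reduction generates $\beta,\Hh$ (or receives them), builds $f_\iota(\x)$, and the only thing hidden from $\mathcal{A}$ is the ciphertext; this is the standard semantic-security game and the argument goes through. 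I would also note the caveat, already made in the paper, that all of this is \emph{computational} security (polynomial-bounded adversaries), inherited from RSA/McEliece, not information-theoretic, and that we assume the coordinator does not intercept the client-to-client channel. With these conventions fixed, steps $(1)$–$(3)$ assemble into the claimed equivalence.
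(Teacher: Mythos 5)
Your proposal is correct and follows essentially the same route as the paper's own proof: both argue that recovering $\Ab_\iota$ from $f_\iota(\x)$ forces the adversary to recover $\beta$ and $\sigma_\iota(\Hh)$, which in turn forces him to intercept and break both public-key cryptosystems, contradicting their assumed security. The paper states this as a short informal proof by contradiction without any reduction formalism, so your reduction-based phrasing — including the sufficiency direction and the concern about simulating $f_\iota(\x)$ in the $\{\beta,\Hh\}$-challenge — is strictly more careful than what the paper actually writes; there is no gap relative to the published argument.
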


\begin{proof}
We prove this by contradiction. Assume that an adversary was able to reverse the encoding $f_\iota(\x)$ of $\Ab_\iota$. This implies that he was able to reveal $\beta$ and $\sigma_\iota(\Hh)\coloneqq\{\eta_{\sigma_{\iota(j)}}\}_{j=1}^\gamma$. The only way to reveal these elements, is if he was able to both intercept and decipher the public-key cryptosystem used by the coordinator, which contradicts the security of the cryptosystem.

In order to invert the multiplications of $\sigma_\iota(\Hh)$ for each of the evaluations of $f_\iota(\x)$, both $\Hh$ and $\sigma_\iota$ need to be known. To do so, the adversary needs to intercept both the communication between the coordinator and the clients, and the communication between the clients, as well as breaking both the cryptosystems used to securely carry out these communications.
\end{proof}

\vspace{-3mm}
\subsection{Phases $(\mathrm{c}),(\mathrm{d})$ --- Computations, Encoding and Decoding}
\label{comp_subs}

At this stage, the workers have knowledge of everything they need in order to recover $\Ab$, before they carry out their computation tasks. By \eqref{decrypt_blocks}, the recovery is straightforward.

For Algorithm \ref{inv_alg}, any CCM in which the workers compute an encoding of partitions of the resulting computation $\Eb=\big[E_1 \ \cdots \ E_k \big]$ could be utilized. It is crucial that the encoding takes place on the computed tasks $\{E_i\}_{i=1}^k$ in the scheme, and \textit{not} the assigned data or partitions of the matrices that are being computed over (such CMM leverage the linearity of matrix multiplication), otherwise the algorithm could potentially not return the correct approximation. This also means that utilizing such encryption approaches (\textit{e.g}. \cite{YSRKSA18}) for guaranteeing security against the workers, is not an option. We face these restrictions due to the fact that matrix inversion is a non-linear operator.

The computation tasks $E_i$ correspond to a partitioning $ \widehat{\Ab^{-1}} = \big[\Acalh_1 \ \cdots \ \Acalh_k \big]$, of our approximation from Algorithm \ref{inv_alg}. We propose a linear encoding of the computed blocks $\{\Acalh_i\}_{i=1}^k$ based on generators satisfying \eqref{min_G_problem}. Along with the proposed decoding step, we have a MDS-based CCM for matrix inversion.

We consider the same parameters as in \ref{enc_A_subs}, in order to reuse $\Bcal$ in the proposed CMIM. Each $\Acalh_i$ is comprised of $T$ distinct but consecutive approximations of \eqref{inv_LS}, \textit{i.e.}
$$ \Acalh_i = \big[\bbh_{(i-1)T+1}\ \cdots \ \bbh_{iT}\big]\in\R^{N\times T} \quad \forall i\in\N_k, $$
which could also be approximated by iteratively solving
$$ \Acalh_i \approx \arg\hspace{-3mm}\min_{\ \ \Bb\in\R^{N\times T}} \hspace{-1mm} \left\{\left\|\Ab\Bb-\big[\eb_{(i-1)T+1}\ \cdots \ \eb_{iT}\big]\right\|_F^2\right\}. $$

Without loss of generality, we assume that the workers use the same algorithms and parameters for estimating the columns $\{\bbh_i\}_{i=1}^N$. Therefore, workers allocated the same tasks are expected to get equal approximations in the same amount of time.

For our CCM, we leverage $\brs$ generator matrices for both the encoding and decoding steps. We adapt the GC framework, so we need an analogous condition to $\ab_{\I}^\top\Gb_\I=\vec{\bold{1}}$ for the CMIM; in order to invoke Algorithm \ref{inv_alg}. The condition we require is $\Dbt_\I\Gbt_\I=\Ib_N$, for an encoding-decoding pair $(\Gbt,\Dbt_\I)$.

From our discussion on $\brs$ codes in \ref{BRS_CC_subs}, we set $\Gbt=\Ib_T\otimes\Gb$ and $\Dbt_\I=\Ib_T\otimes\Gb_\I^{-1}$ for any given set of $k$ responsive workers indexed by $\I$. The index set of blocks requested from the $\iota^{th}$ worker to compute is $\J_\iota\coloneqq\supp(\Gb_{(\iota)})$, and has cardinality $w$. The workers' encoding steps correspond to
\begin{equation}
\label{enc_identity}
  \Gbt\cdot(\widehat{\Ab^{-1}})^\top = (\Ib_T\otimes\Gb)\cdot\begin{bmatrix} \Acalh_1^\top \\ \vdots \\ \Acalh_k^\top \end{bmatrix} = \begin{pmatrix} \sum\limits_{j\in\J_1} p_j(\beta_1)\cdot\Acalh_j^\top \\ \vdots \\ \sum\limits_{j\in\J_n} p_j(\beta_n)\cdot\Acalh_j^\top \end{pmatrix}
\end{equation}
which are carried out locally, once they have computed their assigned tasks. We denote the encoding of the $\iota^{th}$ worker by $\Wb_\iota\in\C^{T\times N}$, \textit{i.e.} $\Wb_\iota = \sum_{j\in\J_\iota} p_j(\beta_\iota)\cdot\Acalh_j^\top$, which is sent to the coordinator. The received encoded computations by any distinct $k$ workers indexed by $\I$, constitute $\Gbt_\I\cdot(\widehat{\Ab^{-1}})^\top$.

Lemma \ref{inverse_lem} implies that as long as $k$ workers respond, the approximation $\widehat{\Ab^{-1}}$ is recoverable. Moreover, the decoding step reduces to a matrix multiplication of $k\times k$ matrices. Applying $\Hb_\I^{-1}$ to a square matrix can be done in $\ow(k^2\log k)$, through the IFFT algorithm. The prevailing computation in our decoding, is applying $\Pb^{-1}$. The decoding step is
{\small
\begin{align*}
  \Dbt_\I\cdot\left(\Gbt_\I\cdot(\widehat{\Ab^{-1}})^\top\right) &= \big(\Ib_T\otimes\Gb_\I^{-1}\big) \cdot\big(\Ib_T\otimes\Gb_\I\big)\cdot(\widehat{\Ab^{-1}})^\top\\
  &= (\Ib_T\cdot\Ib_T)\otimes\left(\Gb_\I^{-1}\cdot\Gb_\I\right)\cdot(\widehat{\Ab^{-1}})^\top \\
  &= \Ib_T\otimes\Ib_k\cdot(\widehat{\Ab^{-1}})^\top\\
  &= (\widehat{\Ab^{-1}})^\top
\end{align*}
}
and our scheme is valid.

The above CCM therefore has a linear encoding done locally by the workers \eqref{enc_identity}, is MDS since $s=d-1$, and its decoding step reduces to computing and applying $\Gb_\I^{-1}$ (Lemma \ref{inverse_lem}). The security of the encodings rely on the secrecy of $\Bcal$, which were sent from the coordinator to the workers. For an additional security layer, the interpolation points of $\Bcal$ could instead be defined as $\beta_{j}=\beta^{\pi(j)}$, for $\pi\in S_n$ a $\prp$. In this case, $\pi^{-1}$ would also need to be securely broadcasted.

\begin{figure}[h]
  \centering
    \includegraphics[scale=.2]{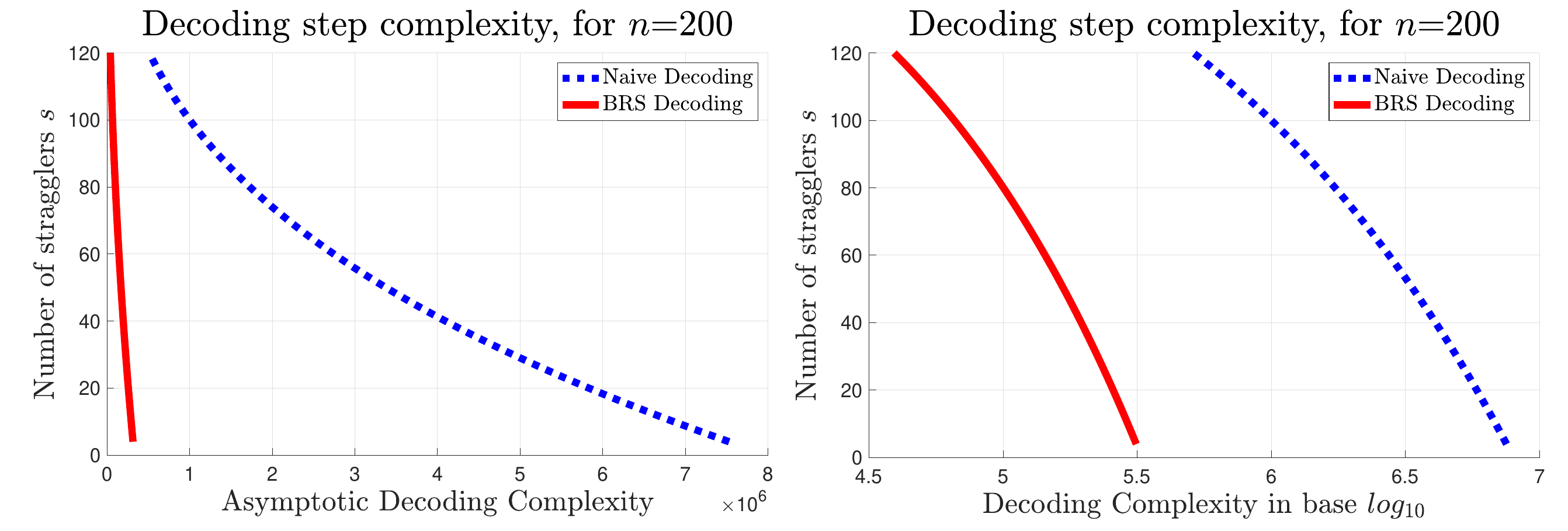}
    \caption{Comparison of decoding complexity, when naive matrix inversion is used (so $\ow(k^3)$) compared to the decoding step implied by Lemma \ref{inverse_lem}, for $n=200$ and varying $s$. We also provide a logarithmic scale comparison.}
  \label{decoding_compl_comparison}
\end{figure}

\begin{Rmk}
\label{rmk_lemma1}
With the above framework, any sparsest-balanced generator MDS matrix \cite{DSDY13} would suffice, as long as it satisfies the MDS theorem \cite{LX04}. By Lemma \ref{inverse_lem}, if we set $k=\Omega(\sqrt{N})$ (as in \cite{YMAA17}), the decoding step could then be done in $\ow\big(N^{\omega/2}\big)=o\big(N^{1.186}\big)$ time, which is close to linear in terms of $N$.
\end{Rmk}

\begin{Thm}
\label{MDS_CC_thm}
Let $\Gb\in\F^{n\times k}$ be a generator matrix of any MDS code over $\F$, for which $\|\Gb^{(j)}\|_0=n-k+1$ and $\|\Gb_{(i)}\|_0=w$ for all $(i,j)\in\N_n\times\N_k$. By utilizing Algorithm \ref{inv_alg}, we can devise a linear MDS coded matrix inversion scheme; through the encoding-decoding pair $(\Gbt,\Dbt_\I)$.
\end{Thm}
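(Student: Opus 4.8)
The plan is to reduce the statement to the two facts already in hand: the MDS property of $\Gb$ supplies the invertibility of every $k\times k$ row-restriction $\Gb_\I$, and Kronecker-product algebra lifts the scalar identity $\Gb_\I^{-1}\Gb_\I=\Ib_k$ to the block identity $\Dbt_\I\Gbt_\I=\Ib_N$ needed to run Algorithm~\ref{inv_alg} distributively. Essentially, every place where the $\brs$ construction of Subsection~\ref{comp_subs} appealed to Lemma~\ref{inverse_lem}, we now appeal directly to the MDS theorem, and every place it appealed to the sparsest/balanced structure we now use the hypotheses $\|\Gb^{(j)}\|_0=n-k+1$ and $\|\Gb_{(i)}\|_0=w$.

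First I would fix the task allocation and encoding exactly as in Subsection~\ref{comp_subs}: worker $\iota$ is assigned the index set $\J_\iota=\{\,j\in\N_k:\Gb_{\iota j}\neq 0\,\}$, which has cardinality $\|\Gb_{(\iota)}\|_0=w$, computes the blocks $\{\Acalh_j\}_{j\in\J_\iota}$ via Algorithm~\ref{inv_alg}, and returns the single encoded block $\Wb_\iota=\sum_{j\in\J_\iota}\Gb_{\iota j}\,\Acalh_j^\top\in\C^{T\times N}$. This is a linear map of the locally computed blocks, carried out without the coordinator's intervention, and each block $\Acalh_j$ is replicated at exactly $\|\Gb^{(j)}\|_0=n-k+1=d$ workers, which is the least redundancy per job compatible with tolerating $s=d-1$ stragglers. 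Setting $\Gbt=\Ib_T\otimes\Gb$, the stacked responses of an index set $\I\subseteq\N_n$ with $|\I|=k$ are $\Gbt_\I\cdot(\widehat{\Ab^{-1}})^\top$ with $\Gbt_\I=\Ib_T\otimes\Gb_\I$, precisely as in \eqref{enc_identity}.

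Next I would invoke the MDS theorem: since $\Gb$ generates an $[n,k]$ MDS code, every set of $k$ of its rows is linearly independent, hence $\Gb_\I\in\F^{k\times k}$ is invertible for all $\I$ with $|\I|=k$. Defining $\Dbt_\I=\Ib_T\otimes\Gb_\I^{-1}$, the mixed-product property of the Kronecker product gives
\[
  \Dbt_\I\,\Gbt_\I=\big(\Ib_T\otimes\Gb_\I^{-1}\big)\big(\Ib_T\otimes\Gb_\I\big)=(\Ib_T\Ib_T)\otimes(\Gb_\I^{-1}\Gb_\I)=\Ib_T\otimes\Ib_k=\Ib_{kT}=\Ib_N,
\]
since $N=kT$. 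Hence, upon receiving $\{\Wb_\iota\}_{\iota\in\I}$ from any $k$ responsive workers, the coordinator recovers $(\widehat{\Ab^{-1}})^\top=\Dbt_\I\big(\Gbt_\I\cdot(\widehat{\Ab^{-1}})^\top\big)$ and transposes, exactly as in the display following \eqref{enc_identity}. The recovery threshold is $k$, the straggler tolerance is $s=n-k=d-1$ which meets the Singleton bound, so the scheme is MDS, and the encoding is linear, giving the compressed worker-to-master communication; thus $(\Gbt,\Dbt_\I)$ is the required encoding-decoding pair.

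Finally I would remark that the only content beyond bookkeeping is the interaction between the row-restriction and the Kronecker lift: one must check that restricting $\Gbt=\Ib_T\otimes\Gb$ to the block-rows indexed by $\I$ yields $\Ib_T\otimes\Gb_\I$ rather than a permuted variant, which holds once the $T$ scalar copies attached to each worker are kept contiguous in the ordering of the rows of $\Gbt$. I expect this ordering check to be the main (and essentially only) obstacle; everything else is an immediate transcription of Subsection~\ref{comp_subs}, with Lemma~\ref{inverse_lem} replaced by the MDS theorem and Lemma~\ref{lem_sol_opt_prob} replaced by the stated sparsity hypotheses on $\Gb$.
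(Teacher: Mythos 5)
Your proposal is correct and follows essentially the same route as the paper's proof: the encoding is $\Gbt=\Ib_T\otimes\Gb$ as in \eqref{enc_identity}, the MDS theorem gives invertibility of every $\Gb_\I$, and the decoder is $\Dbt_\I=\Ib_T\otimes\Gb_\I^{-1}$ via the mixed-product property. Your additional check that restricting the Kronecker lift to the rows of the responsive workers really yields $\Ib_T\otimes\Gb_\I$ (and not a permuted variant) is a small bookkeeping point the paper leaves implicit, but it does not change the argument.
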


\begin{proof}
The encoding coefficients applied locally by each of the $n$ workers correspond to a row of $\Gb$. The encodings of all the workers then correspond to $\Gbt\cdot(\widehat{\Ab^{-1}})^\top$, for $\Gbt=\Ib_T\otimes\Gb$, as in \eqref{enc_identity}. Consider any set of responsive workers $\I$ of size $k$, whose encodings constitute $\Gbt_\I\cdot(\widehat{\Ab^{-1}})^\top$. By the MDS theorem, $\Gb_\I$ is invertible. Hence, the decoding step reduces to inverting $\Gb_\I$; \textit{i.e.} $\Dbt_\I=\Ib_T\otimes\Gb_\I^{-1}$, and is performed online.
\end{proof}

Constructions based on cyclic MDS codes, which have been used to devise GC schemes \cite{RTTD17}, can also be considered. These encoding matrices are not sparsest-balanced, which makes them suitable when considering \textit{heterogeneous} workers.

\begin{Prop}
\label{cyclic_MDS_prop}
Any cyclic $[n,k]$ MDS code $\Cc$ over $\F\in\{\R,\C\}$ can be used to devise a coded matrix inversion encoding-decoding pair $(\Gbt,\Dbt_\I)$.
\end{Prop}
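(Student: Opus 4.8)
The plan is to reduce the statement to Theorem \ref{MDS_CC_thm}. The key observation is that the proof of Theorem \ref{MDS_CC_thm} never uses the sparsest or balanced hypotheses on $\Gb$ --- these only certify an \emph{optimal} task allocation (cf. Lemma \ref{lem_sol_opt_prob}) --- so for \emph{correctness} it suffices to exhibit an $n\times k$ matrix over $\F$ all of whose $k$-row restrictions are invertible, and a cyclic MDS code supplies exactly such a matrix.

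First I would fix the representation. A cyclic $[n,k]$ code $\Cc$ over $\F$ is the ideal of $\F[\x]/(\x^n-1)$ generated by a monic $g(\x)\mid\x^n-1$ with $\text{deg}(g)=n-k$; over $\F\in\{\R,\C\}$ the polynomial $\x^n-1$ is separable, so such a $g$ exists (and $\Cc$ is MDS by hypothesis; e.g. the roots-of-unity Reed--Solomon codes used for GC in \cite{RTTD17} are cyclic and MDS over $\C$, while over $\R$ one additionally asks the zero set of $g$ to be closed under conjugation so that $g(\x)\in\R[\x]$). Let $M\in\F^{k\times n}$ be the canonical generator matrix of $\Cc$, whose $i$-th row is the zero-padded coefficient vector of $\x^{i-1}g(\x)$, and set $\Gb\coloneqq M^\top\in\F^{n\times k}$, so that row $\Gb_{(\iota)}$ holds the encoding coefficients used locally by worker $\iota$. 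By the banded (Toeplitz-type) shape of $M$, each such row is supported on a window of at most $\text{deg}(g)+1=n-k+1$ blocks, so every worker's encoding \eqref{enc_identity} combines at most $n-k+1$ of the $\Acalh_j$'s; these supports are generally \emph{not} balanced, which is precisely what makes cyclic constructions natural for heterogeneous workers.

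Next I would establish invertibility of the restrictions. Since $\Cc$ is MDS, every $k$ columns of a generator matrix of $\Cc$ are linearly independent \cite{LX04}; columns of $M$ are the rows of $\Gb$, hence $\Gb_\I\in\F^{k\times k}$ is invertible for every $\I\subsetneq\N_n$ with $|\I|=k$. Now the construction in the proof of Theorem \ref{MDS_CC_thm} goes through verbatim: put $\Gbt=\Ib_T\otimes\Gb$ and $\Dbt_\I=\Ib_T\otimes\Gb_\I^{-1}$. The workers obtain the blocks $\{\Acalh_j\}_{j=1}^k$ of $\widehat{\Ab^{-1}}$ through Algorithm \ref{inv_alg} and transmit $\Wb_\iota=\sum_{j}\Gb_{\iota j}\,\Acalh_j^\top$; once any $k$ of them indexed by $\I$ respond, the coordinator holds $\Gbt_\I\cdot(\widehat{\Ab^{-1}})^\top$ and computes
\begin{equation*}
  \Dbt_\I\cdot\big(\Gbt_\I\cdot(\widehat{\Ab^{-1}})^\top\big)=\big(\Ib_T\otimes(\Gb_\I^{-1}\Gb_\I)\big)\cdot(\widehat{\Ab^{-1}})^\top=(\widehat{\Ab^{-1}})^\top,
\end{equation*}
which recovers the approximation; hence $(\Gbt,\Dbt_\I)$ is a valid coded matrix inversion pair.

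I do not expect a genuine obstacle here; the only points meriting care are (i) matching the cyclic generator matrix to the paper's $n\times k$ (worker $\times$ block) convention, so that it is $M^\top$ rather than $M$ that plays the role of $\Gb$; and (ii) the cost of the online step: without the Vandermonde-times-coefficient factorization $\Gb=\Hb\Pb$ available for $\brs$ codes (Lemma \ref{inverse_lem}), inverting $\Gb_\I$ for a generic cyclic code costs $\ow(k^\omega)$ and one loses the cheap $\ow(k^2\log k)$ IFFT component --- although the banded structure of $M$ (and, over $\C$, the choice of roots of unity as defining set) can be exploited to recover a comparable speedup when desired.
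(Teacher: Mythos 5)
Your proposal is correct, and it reaches the same encoding--decoding pair $(\Ib_T\otimes\Gb,\ \Ib_T\otimes\Gb_\I^{-1})$ as the paper, but it gets there by a different construction and a different invertibility argument. The paper does not use the canonical banded generator matrix: it invokes \cite[Lemma 8]{RTTD17} to produce a codeword $\gb_1\in\Cc$ of minimal support $d=s+1$, takes its $n$ cyclic shifts as the columns of an $n\times n$ circulant $\Gbb$ (which is sparsest \emph{and} balanced), deletes an arbitrary set of $s$ columns to obtain $\Gb\in\F^{n\times k}$, and then cites \cite[Lemma 12, B4]{RTTD17} for the invertibility of every $\Gb_\I$. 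Your $\Gb=M^\top$ is essentially a special case of this: the minimum-weight codeword of a cyclic MDS code is (a scalar multiple of) the coefficient vector of $g(\x)$, all of whose $n-k+1$ coefficients are nonzero, and the rows of $M$ are exactly its $k$ non-wrapping cyclic shifts --- so your matrix corresponds to one particular choice of which $s$ columns of $\Gbb$ to delete. What you lose is the paper's generality (any $s$ deletions) and its observation that the undeleted circulant is balanced; what you gain is a self-contained invertibility proof, since for the canonical generator matrix the statement ``every $k$ columns of $M$ are linearly independent'' is literally item 3 of the MDS theorem (Theorem \ref{MDS_thm}), with no need for the external lemmas of \cite{RTTD17}. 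Your side remarks are also accurate: the sparsest/balanced hypotheses play no role in correctness, every column of your $\Gb$ in fact has weight exactly $d=n-k+1$ (so it is sparsest but not balanced), and the loss of the $\Hb\Pb$ factorization does raise the generic online decoding cost to $\ow(k^\omega)$.
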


\begin{proof}
Consider a cyclic $[n,n-s]$ MDS code $\Cc$ over $\F\in\{\R,\C\}$. Recall that from our assumptions, we have $s=n-k$. By \cite[Lemma 8]{RTTD17}, there exists a codeword $\gb_1\in\Cc$ of support $d=s+1$, \textit{i.e.} $\|\gb_1\|_0=d$. Since $\Cc$ is cyclic, it follows that the cyclic shifts of $\gb_1$ also lie in $\Cc$. Denote the $n-1$ consecutive cyclic shifts of $\gb_1$ by $\{\gb_i\}_{i=2}^n\subsetneq\Cc\subsetneq\F^{1\times n}$, which are all distinct. Define the cyclic matrix
\begin{equation*}
  \Gbb \coloneqq {\begin{pmatrix} | & | & & | \\ \gb_1^\top & \gb_2^\top & \hdots & \gb_n^\top \\ | & | & & | \end{pmatrix}} \in \F^{n\times n}.
\end{equation*}

Since $\|\gb_i\|_0=d$ and $\gb_i$ is a cyclic shift of $\gb_{i-1}$ for all $i>1$, it follows that $\|\Gbb_{(i)}\|_0=\|\Gbb_{(j)}\|_0=d$ for all $i,j\in\N_n$, \textit{i.e.} $\Gbb$ is sparsest and balanced. If we erase \textit{any} $s=n-k$ columns of $\Gbb$, we get $\Gb\in\F^{n\times k}$. By erasing arbitrary columns of $\Gbb$, the resulting $\Gb$ is \textit{not} balanced, \textit{i.e.} we have $\|\Gb_{(i)}\|_0\neq\|\Gb_{(j)}\|_0$ for some pairs $i,j\in\N_n$. Similar to our construction based on $\brs$ generator matrices, we define the encoding matrix to be $\Gbt=\Ib_T\otimes\Gb$. The local encodings are then analogous to \eqref{enc_identity}.

Consider an arbitrary set of $k$ non-straggling workers $\I\subsetneq\N_n$, and the corresponding matrix $\Gb_\I\in\F^{k\times k}$. By \cite[Lemma 12, B4.]{RTTD17}, $\Gb_\I$ is invertible. The decoding matrix is then $\Dbt_\I=\Ib_T\otimes\Gb_\I^{-1}$, and the condition $\Dbt_\I\Gbt=\Ib_N$ is met.
\end{proof}

\subsection{Optimality of MDS $\brs$ Codes}
\label{opt_BRS_subs}

Under the assumption that $k=n-s$, by utilizing the $\brs_q[n,k]$ generator matrices, we achieved the minimum possible communication load from the workers to the coordinator. From our discussion in \ref{know_of_A_subs}, we cannot hope to receive an encoding of less than $N^2/k$ symbols; when we require that $k$ workers respond with the same amount of information symbols in order to recover $\widehat{\Ab^{-1}}\in\R^{N\times N}$, unless we make further assumptions on the structure of $\Ab$ and $\Ab^{-1}$. Each encoding $\Wb_\iota$ consists of $NT=N^2/k$ symbols, so we have achieved the lower bound on the minimum amount of information needed to be sent to the coordinator. Moreover, $\Wb_\iota\in\C^{T\times N}$ for any sparsest-balance generator MDS matrix. This also holds true for other generator matrices which can be used in Theorem \ref{MDS_CC_thm}, as the encodings are linear (\textit{e.g.} Proposition \ref{cyclic_MDS_prop}).

We also require the workers to estimate the least possible number of columns for the given recovery threshold $k$. For our choice of parameters, the bound of \cite[Theorem 1]{TLDK17} is met with equality. That is, for all $i\in\N_n$:
$$ \|\Gb_{(i)}\|_0 = w = \frac{k}{n}\cdot d = \frac{k}{n}\cdot(n-k+1)\ , $$
which means that for homogeneous workers, we cannot get a sparser generator matrix. This, along with the requirement that $\Gb_\I$ should be invertible for all possible $\I$, are what we considered in \eqref{min_G_problem}.

\subsection{Time and Space Complexity}
\label{complexity_subsec}

Next, we discuss the complexity of our method. Communication loads and storage are measured in symbols over $\R$. For simplicity, we assume that the $n$ workers are homogeneous and the local data blocks $\{\Ab_\iota\}_{\iota=1}^k$ are of size $N\times T$, for $T=N/k=\Gamma\gamma$. To further simplify our expressions and for fair comparisons to other polynomial codes, we set $k=n=\Omega(\sqrt{N})$ (as in \cite{YMAA17}).

Let $\nu=|\Enc(\beta)|$ denote the number of symbols required for the encoding of $\beta$ through a public-key cryptosystem used to securely broadcast a symbol. Further note that $|\Enc(\Hh)|=\gamma\nu$ and $|\Enc(\sigma_\iota)|\leqslant\gamma\nu/2$, when the same cryptosystem is used to broadcast $\Hh$ and $\sigma_\iota$ respectively. Hence, phase $(\mathrm{a})$ requires a communication load of $\ow(\nu\gamma)$ symbols per client.

There are $\gamma$ modified Lagrange polynomials in \eqref{lagr_pol_matr}; each of which require $\ow(n-1)$ operations to compute. Multiplying each polynomial with one of the $\gamma$ sub-blocks $\{\Ab_\iota^j\}_{j=1}^\gamma$, requires $\gamma\cdot\ow(N\Gamma)=\ow(NT)=\ow(N^{3/2})$ additional operations in total. Finally, summing over the encoded blocks requires $\ow(N\Gamma(\gamma-1))=\ow(N^{3/2})$ operations. Hence, the encoding through the polynomials $f_\iota(\x)$ has complexity $\ow\big(2N^{3/2}+\gamma(n-1)\big)=\ow\big(\sqrt{N}(N+\gamma)\big)$ for each data block, which is done locally by the clients. The $\prp$ used could be any block cipher, \textit{e.g.} the Feistel cipher; which has time complexity $\ow(n)$. Therefore, phase $(\mathrm{b})$ has complexity $\ow\big(\sqrt{N}(N+\gamma+1)\big)$ and the clients communicate $2NT+n=\Omega\big(\sqrt{N}(N+1)\big)$ symbols to each other, which accounts for a total communication load of $(k-1)\cdot\Omega\big(\sqrt{N}(N+1)\big)=\Omega(N^2)$ symbols per client.

At phase $(\mathrm{c})$, the $\iota^{th}$ client first recovers $\Ab$ by evaluating $\eta_{\sigma_\iota(j)}^{-1}f_\iota(\beta_j)$; for each $j\in\N_\gamma$, which requires a total of $\ow(\gamma\cdot N\Gamma)=\ow\big(N^{3/2}\big)$ operations. The complexity of the computation tasks depends on the underlying optimization algorithm used by the workers; and the desired level of accuracy. By Proposition \ref{rel_err_bound_prop_CG}, under the given assumptions, when using CG we converge after $\tilde{N}$ iterations per column estimate, and each iteration has complexity $\ow(N)$. Therefore, the complexity of the workers' tasks are $\ow(\tilde{N}NT)=\ow\big(\tilde{N}N^{3/2}\big)$. All in all, the computation tasks at phase $(\mathrm{c})$ have total complexity $\ow\big((\tilde{N}+1)N^{3/2}\big)=\ow\big(\tilde{N}N^{3/2}\big)$ per worker. Since $\Wb_\iota\in\C^{T\times N}$ for each $\iota$, the communication load is $2NT=2N^2/k=\Omega\big(N^{3/2}\big)$ symbols. Furthermore, the baseline to computing the $\Ab^{-1}$ is Gaussian elimination; which has complexity $\ow(N^3)$ when carried out on one server, while our approach through CG has complexity $k\cdot\ow\big(\tilde{N}N^{3/2}\big)=\ow\big(\tilde{N}N^2\big)$.

By Lemma \ref{inverse_lem} and Remark \ref{rmk_lemma1}; the decoding takes $\ow(N^{\omega/2})=o(N^{1.186})$ time, which amounts to the complexity of phase $(\mathrm{d})$. Since our recovery threshold is $k$, phase $(\mathrm{d})$ no more than $k\cdot \Omega(NT)=\Omega(N^2)$ symbols need to be received and stored by the coordinator, who finally recovers a matrix of size $N\times N$.

We summarize the communication loads (C.L.) and time complexity (T.C.) of each of the four phases in the Table below. The time complexity for phase $(\mathrm{a})$ depends on the encryption method that is used to securely communicate $\beta,\Hh$; which we do not study, and there is no communication taking place in phase $(\mathrm{d})$. Phases $(\mathrm{a})$ and $(\mathrm{d})$ correspond to the coordinator, $(\mathrm{b})$ to each client, and $(\mathrm{c})$ to each worker.

\begin{center}
\begin{tabular}{ |p{.5cm}||p{1.71cm}|p{1.94cm}|p{1.22cm}|p{1.3cm}| }
\hline
\multicolumn{5}{|c|}{\textbf{Communication Loads $\&$ Time Complexities}} \\
\hline
\footnotesize{Phase} & {\footnotesize$(\mathrm{a})$ \textit{share} $\beta,\Hh$} & {\footnotesize$(\mathrm{b})$ \textit{encrypt} $\Ab_\iota$} & {\footnotesize$(\mathrm{c})$ \textit{CC job}} & {\footnotesize$(\mathrm{d})$ \textit{decode}} \\
\hhline{|=|=|=|=|=|}
\small{C.L.} & {\footnotesize$\ow(\nu\gamma)$} & {\footnotesize$\Omega(N^2)$} & {\footnotesize$\Omega\big(N^{3/2}\big)$} & \xmark \\
\hline
\small{T.C.} & \xmark & {\footnotesize$\ow\big(\sqrt{N}(N+\gamma)\big)$} & {\footnotesize$\ow\big(\tilde{N}N^{3/2}\big)$} & {\footnotesize$o\big(N^{1.186}\big)$} \\
\hline
\end{tabular}
\end{center}
\vspace{2mm}

A bottleneck of our approach is the workers' storage requirement. As was discussed in \ref{know_of_A_subs}, the workers need to recover $\Ab$, so they need to store a total of $N^2$ symbols. The central server receives a total of $k$ completed tasks $\{\Wb_i\}_{i\in\I}$, which constitute to a total of $2N^2$ symbols. Further examining this drawback would be worthwhile future work.

\subsection{Comparison to Exact Matrix Inversion}
\label{comp_exact_inversion}

We conclude this section with a discussion on the conditions under which our CMIM will have advantages over standard matrix inversion approaches. First of all, the main bottleneck of our approach is the fact that each worker has a storage requirement of $N^2$ symbols. When $N$ is relatively small, matrix inversion can be performed by a single server; though the time complexity is still high, in which case our distributed approach is beneficial. In this scenario, the storage constraint is not an issue. For $N$ very large, our approach is still advantageous in terms of time complexity, as a single server would need to perform the entire computation on its own; while also storing the entire matrix. In this case, the storage requirement of our approach is disadvantageous, since we require total storage of $kN^2$ symbols across the network, while matrix inversion only requires $N^2$. This is the cost we pay for performing our method distributively.

The second point of comparison is approximation accuracy. The accuracy of standard finite precision matrix inversion  is controlled by the number of bits of precision used by the multiplier. On the other hand, by design, our proposed algorithm introduces an additional approximation error due to its reliance on successive approximation iterations. As we showed numerically though in Figure \ref{err_MNIST_CG}, after a few iterations of Algorithm \ref{inv_alg} with CG; we can achieve the same error rate as when exact matrix inversion is used, with a lower complexity. Furthermore, in building risk minimizing ML models, approximate solutions using iterative approximations are often faster and sufficient for achieving desired performance benchmarks. Additionally, the approximation accuracy of our proposed matrix inversion method is controllable by adjusting the number of iterations carried out locally by the workers.

Lastly, we discuss when Algorithm \ref{inv_alg} might have advantages over exact matrix inversion in terms of computational complexity and waiting time. For simplicity, we assume that exact computation of $\Ab^{-1}$ requires $\ow(N^{2.372})$ operations. When utilizing our algorithm with CG on a single server, we require $\ow(\tilde{N}N^2)$ operations to guarantee convergence. Thus, in this case; Algorithm \ref{inv_alg} is beneficial when $\tilde{N}<N^{0.372}$, where $\tilde{N}$ is the number of distinct eigenvalues of $\Ab^\top\Ab$. When employing a distributed implementation, in terms of the waiting time through phase $(\mathrm{c})$; our approach is beneficial when $\tilde{N}<N^{0.872}$.

\vspace{-1mm}
\section{Conclusion and Future Work}
\label{concl_sec}

In this paper, we addressed the problem of approximate computation of the inverse of a matrix distributively in a relaxed FL setting, under the possible presence of straggling workers. We provided approximation error bounds for our approach, as well as security and recovery guarantees. We also provided numerical experiments that validated our proposed approach.

There are several interesting future directions. One avenue to consider is incorporating fully homomorphic encryption in our phases $(\mathrm{b})$,$(\mathrm{c})$,$(\mathrm{d})$, to obtain a FL scheme; and prevent the requirement of clients need to recover each others' information. An important issue is the numerical stability of the $\brs$ approach, so exploring other suitable generator matrices could be beneficial; \textit{e.g.} circulant permutation and rotation matrices \cite{RT21}. It is also worth investigating if we can reduce the communication rounds when computing the pseudoinverse through our approach. This depends on the CMM which is being utilized, though using different ones for each of the two multiplications may also be beneficial.

In terms of coding-theory, it would be interesting to see if it is possible to reduce the complexity of our decoding step. Specifically, could well-known $\rs$ decoding algorithms such as the Berlekamp-Welch algorithm be exploited? Another direction, is leveraging approximate CCMs. The work of \cite{JNMA21} considers the GC problem for \textit{approximate} and \textit{exact} recovery through Lagrange interpolation, for heterogeneous workers in the presence of stragglers and adversaries. A potential scheme for matrix inversion could also be developed through the methods of \cite{JNMA21}. In terms of our approximation algorithms, an avenue worth exploring is that of incorporating approximate and/or sparse Gaussian elimination \cite{KS16,Kyng17} into our distributed CCM.\\

\noindent\textbf{Tribute to Alex Vardy:} As this is a special issue dedicated to the memory Alexander Vardy, we mention how this paper relates to his work. Even though Alex had not worked on CC, his contributions to $\rs$ codes are immense. A focus of ours is to reduce the decoding complexity of the proposed $\brs$-based CCM, while in \cite{GV05} it was shown by Guruswami and Vardy that maximum-likelihood decoding of $\rs$ codes is \textsf{NP-hard}. Another highly innovative work of Vardy's is \cite{PV05}, in which the `Parvaresh-Vardy codes' were introduced; and the associated list-decoding algorithm was shown to yield an improvement over the Guruswami–Sudan algorithm. This was subsequently improved by Guruswami and Rudra \cite{GR08}, whose techniques were exploited in \cite{SLMA21} to introduce list-decoding in CC.\\

\noindent\textbf{Acknowledgements:} This work was partially supported by grants ARO W911NF-15-1-0479, DE NA0003921, NSF ECCS-2037304 and DMS-2134248, NSF CAREER Award CCF-2236829, U.S. ARO Early Career W911NF-21-1-0242, in part by the Stanford Precourt Institute; and the ACCESS -- AI Chip Center for Emerging Smart Systems through InnoHK, Hong Kong, SAR.

\appendices

\section{Additional Material and Background}

In this appendix, we include material and background which was used in our derivations. First, we recall what an \textit{$\epsilon$-optimal solution/point} is, which was used in the proof of Proposition \ref{rel_err_bound_prop_CG}. Next, we state the MDS theorem and the $\bch$ Bound. We then give a brief overview of the GC scheme from \cite{HASH17}, to show how it differs from our CMIM. We also explicitly give their construction of a balanced mask matrix $\Mb\in\{0,1\}^{n\times k}$, which we use for the construction of the $\brs$ generator matrices. Lastly, we illustrate a simple example of the encoding matrix.

\begin{Def}[\cite{BWZ08}]
A point $\bar{x}$ is said to be an \textbf{$\epsilon$-optimal solution/point} to a minimization problem with objective function $f(x)$, if for any $x$, it holds that $f(x)\geqslant f(\bar{x})-\epsilon$, where $\epsilon\geqslant0$. When $\epsilon=0$, an $\epsilon$-optimal solution is an exact minimizer.
\end{Def}

\begin{Thm}[MDS Theorem --- \cite{LX04}]
\label{MDS_thm}
\textit{Let $\Cc$ be a linear $[n,k,d]$ code over $\F_q$, with $\Gb,\Kb$ the generator and parity-check matrices. Then, the following are equivalent}:
\begin{enumerate}
  \item \textit{$\Cc$ is a MDS code, \textit{i.e.} $d=n-k+1$}
  \item \textit{every set of $n-k$ columns of $\Kb$ is linearly independent}
  \item \textit{every set of $k$ columns of $\Gb$ is linearly independent}
  \item \textit{$\Cc^{\perp}$ is a MDS code}.\\
\end{enumerate}
\end{Thm}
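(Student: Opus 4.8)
The plan is to reduce all four equivalences to a single core lemma about an arbitrary linear code, and then to obtain condition (4) by duality. Throughout I adopt the standard convention in which $\Gb\in\F_q^{k\times n}$ has full row rank $k$ and $\Kb\in\F_q^{(n-k)\times n}$ has full row rank $n-k$, so that $\Cc=\{\xb\Gb:\xb\in\F_q^k\}=\{\cb\in\F_q^n:\Kb\cb^\top=\bold{0}\}$. The whole argument rests on the two correspondences between codewords and minimally dependent sets of columns of $\Kb$ and of $\Gb$.

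First I would establish (1)$\Leftrightarrow$(2) via the parity-check correspondence. A nonzero $\cb\in\Cc$ of weight $w$ and support $T$ satisfies $\sum_{i\in T}c_i\Kb^{(i)}=\bold{0}$ with every $c_i\neq 0$, which is a linear dependence among $w$ columns of $\Kb$; conversely any dependence among a set of columns yields a nonzero codeword supported on that set. Hence $d$ equals the least number of linearly dependent columns of $\Kb$. Since $\Kb$ has only $n-k$ rows, every $n-k+1$ of its columns are automatically dependent, which recovers the Singleton bound $d\leq n-k+1$ in one line. Therefore $d=n-k+1$ holds iff no collection of $n-k$ (or fewer) columns of $\Kb$ is dependent, i.e. iff every $n-k$ columns of $\Kb$ are independent, which is precisely (2).

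Next I would prove (1)$\Leftrightarrow$(3) via the generator correspondence. If some $k$ columns of $\Gb$, indexed by $S$, are dependent, then the $k\times k$ submatrix $\Gb_S$ formed by those columns is singular and has a nonzero left-kernel vector $\ub$; the codeword $\ub\Gb$ is then nonzero (as $\Gb$ has full row rank) and vanishes on $S$, so its weight is at most $n-k$ and $\Cc$ is not MDS. Conversely, if $\Cc$ is not MDS there is a nonzero codeword $\ub\Gb$ of weight at most $n-k$, whose zero coordinates number at least $k$; restricting $\Gb$ to any $k$ of those coordinates gives a singular $k\times k$ block, so some $k$ columns of $\Gb$ are dependent. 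Together with the Singleton bound this gives (1)$\Leftrightarrow$(3).

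Finally I would obtain (4) purely by duality, which I expect to be the only delicate step. The dual code $\Cc^\perp$ is an $[n,n-k]$ code whose generator matrix is $\Kb$ and whose parity-check matrix is $\Gb$. Applying the already-proved equivalences to $\Cc^\perp$ and substituting its parameters shows that ``$\Cc^\perp$ is MDS'' is equivalent to ``every $n-(n-k)=k$ columns of its parity-check matrix $\Gb$ are independent'' (which is (3)) and equally to ``every $n-k$ columns of its generator matrix $\Kb$ are independent'' (which is (2)). The main obstacle is the index bookkeeping across the generator/parity-check swap: one must track that the roles of (2) and (3) interchange under dualization, that the dimension drops to $n-k$, and that the counts $n-k$ and $k$ land on the correct matrices. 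Once this is verified, (4) is equivalent to both (2) and (3), hence to (1), and all four conditions coincide.
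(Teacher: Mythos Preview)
The paper does not actually prove this theorem: it is stated in the appendix as background material and attributed to \cite{LX04} without any accompanying argument. So there is no ``paper's own proof'' to compare against.

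That said, your proof is correct and is essentially the standard textbook argument one finds in the cited reference. The reduction of (1)$\Leftrightarrow$(2) to the correspondence between codeword supports and column dependencies of $\Kb$ is the usual one-line derivation, and your (1)$\Leftrightarrow$(3) via left-kernel vectors of $k\times k$ submatrices of $\Gb$ is likewise standard. Your handling of (4) by swapping the roles of $\Gb$ and $\Kb$ under dualization and re-applying the first two equivalences is exactly right, and your bookkeeping of the dimension shift $k\mapsto n-k$ is correct. There is no gap.
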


\begin{Thm}[$\bch$ Bound --- \cite{HLH16},\cite{Mc01}]
\label{BCH_bd}
Let $p(\x)\in\F_q[\x]\backslash\{0\}$ with $t$ cyclically consecutive roots, \textit{i.e.} $p(\alpha^{j+\iota})=0$ for all $\iota\in\N_t$. Then, at least $t+1$ coefficients of $p(\x)$ are nonzero.
\end{Thm}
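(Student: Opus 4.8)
The plan is to prove the statement by contradiction, via the classical Vandermonde nonsingularity argument. Suppose $p(\x)\in\F_q[\x]\backslash\{0\}$ vanishes at the $t$ consecutive powers $\alpha^{j+1},\ldots,\alpha^{j+t}$ yet has at most $t$ nonzero coefficients. I would write $p(\x)=\sum_{l=1}^{m}c_l\x^{i_l}$ with $m\leqslant t$, pairwise distinct exponents $0\leqslant i_1<\cdots<i_m$, and every $c_l\in\F_q^{\times}$. The goal is to show these hypotheses force $p\equiv 0$, a contradiction, whence $m\geqslant t+1$.

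First I would encode the root conditions as a linear system. Evaluating at $\alpha^{j+\iota}$ for $\iota\in\N_t$ gives $\sum_{l=1}^{m}c_l\,\alpha^{(j+\iota)i_l}=0$. Since $m\leqslant t$, I retain only the first $m$ of these equations ($\iota=1,\ldots,m$), yielding a homogeneous system $\Mb\cb=\vec{\bold{0}}$ in the unknown coefficient vector $\cb=(c_1,\ldots,c_m)^\top$, where $\Mb_{\iota l}=\alpha^{(j+\iota)i_l}$.

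The heart of the argument is showing $\Mb$ is invertible. Using $\alpha^{(j+\iota)i_l}=\alpha^{ji_l}(\alpha^{i_l})^{\iota}$ and setting $\beta_l\coloneqq\alpha^{i_l}$, I would factor two diagonal matrices out of $\Mb$ on the right: $\Mb=\Vb\cdot\mathrm{diag}(\beta_1,\ldots,\beta_m)\cdot\mathrm{diag}(\alpha^{ji_1},\ldots,\alpha^{ji_m})$, where $\Vb_{\iota l}=\beta_l^{\iota-1}$ is a transposed Vandermonde matrix with determinant $\prod_{l<l'}(\beta_{l'}-\beta_l)$. Since $\alpha$ is a primitive root and, in our setting, $\text{deg}(p)\leqslant k-1<q-1$, the distinct exponents $i_l$ lie in a single period of $\alpha$, so the nodes $\beta_l$ are pairwise distinct and nonzero. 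Hence the Vandermonde determinant and both diagonal factors are nonzero, giving $\text{det}(\Mb)\neq 0$. Therefore $\cb=\vec{\bold{0}}$, i.e. all $c_l=0$, contradicting $p\neq 0$; consequently $p$ must have at least $t+1$ nonzero coefficients.

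The main obstacle is precisely the distinctness of the nodes $\beta_l=\alpha^{i_l}$: without the degree restriction $\text{deg}(p)<q-1$ the claim is actually false (e.g. $\x^{q-1}-1$ vanishes on all of $\F_q^{\times}$, hence has $q-1$ consecutive roots but only two nonzero coefficients), because two exponents congruent modulo $q-1$ would collapse to the same node and break the Vandermonde structure. The key step is thus to justify that in the regime where the bound is invoked here --- with $p_j$ of degree at most $k-1<q$ --- distinct exponents genuinely give distinct field elements, which is what keeps $\Mb$ nonsingular.
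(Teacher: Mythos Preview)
The paper does not actually supply a proof of this statement: Theorem~\ref{BCH_bd} is listed in the appendix as background material, with citations to \cite{HLH16} and \cite{Mc01}, and is simply invoked where needed (in \S\ref{BRS_CC_subs} and in the proof of Lemma~\ref{inverse_lem}). So there is no ``paper's own proof'' to compare your proposal against.

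That said, your argument is the standard one and is correct. The Vandermonde factorization $\Mb=\Vb\cdot\mathrm{diag}(\beta_l)\cdot\mathrm{diag}(\alpha^{ji_l})$ is exactly how the classical BCH bound is derived, and you have also put your finger on the one subtlety the paper's bare statement glosses over: as written, with $p\in\F_q[\x]\backslash\{0\}$ and no explicit degree restriction, the claim is false (your counterexample $\x^{q-1}-1$ is the right one). The intended hypothesis, which you correctly supply, is $\text{deg}(p)\leqslant q-2$ so that distinct exponents $i_l$ give distinct nodes $\alpha^{i_l}$; this is automatically satisfied in the paper's applications since the polynomials $p_j(\x)$ of \eqref{lagr_polys} have degree $k-1<n<q$. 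Your proof is complete once that hypothesis is made explicit.
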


\begin{algorithm}[h]
\label{RBMM}
\SetAlgoLined
\KwIn{$n,k,d\in\Z_+$ s.t. $n>d,k$ and $w=\frac{kd}{n}$}
\KwOut{row-balanced mask matrix $\Mb\in\{0,1\}^{n\times k}$}
  $\Mb \gets \bold{0}_{n\times k}$\\
  \For{$j = 0$ to $k-1$}
    {
    \For{$i = 0$ to $d-1$}
      {
        $\iota \gets (i+jd+1)\bmod n$\\
        $\Mb_{r,\iota} \gets 1$
      }
    }
 \Return $\Mb$
 \caption{$\mathrm{MaskMatrix}(n,k,d)$ \cite{HASH17}}
\end{algorithm}

\subsection{Generator Matrix Example}

For an example, consider the case where $n=9$, $k=6$ and $d=6$, thus $w=\frac{kd}{n}=4$. Then, Algorithm \ref{RBMM} produces 
$$ \Mb = \begin{pmatrix} 1 & 1 & {\gray 0} & 1 & 1 & {\gray 0} \\ 1 & 1 & {\gray 0} & 1 & 1 & {\gray 0} \\ 1 & 1 & {\gray 0} & 1 & 1 & {\gray 0} \\ 1 & {\gray 0} & 1 & 1 & {\gray 0} & 1 \\ 1 & {\gray 0} & 1 & 1 & {\gray 0} & 1 \\ 1 & {\gray 0} & 1 & 1 & {\gray 0} & 1 \\ {\gray 0} & 1 & 1 & {\gray 0} & 1 & 1 \\ {\gray 0} & 1 & 1 & {\gray 0} & 1 & 1 \\ {\gray 0} & 1 & 1 & {\gray 0} & 1 & 1 \end{pmatrix} \ \in\{0,1\}^{9\times 6} \ . $$
For our CCM, this means that the $i^{th}$ worker computes the blocks indexed by $\supp(\Mb_{(i)})$, \textit{e.g}. $\supp(\Mb_{(1)})=\{1,2,4,5\}$. We denote the indices of the respective task allocations by $\J_i=\supp(\Mb_{(i)})$. The entries of the generator matrix $\Gb$ are the evaluations of the constructed polynomials \eqref{lagr_polys} at each of the  evaluation points $\Bcal=\{\beta_i\}_{i=1}^n$, \textit{i.e.} $\Gb_{ij}=p_j(\beta_i)$. This results in:
$$ \Gb = \begin{pmatrix} p_1({\beta_1}) & p_2({\beta_1}) & {\gray 0} & p_4({\beta_1}) & p_5({\beta_1}) & {\gray 0} \\ p_1({\beta_2}) & p_2({\beta_2}) & {\gray 0} & p_4({\beta_2}) & p_5({\beta_2}) & {\gray 0} \\ p_1({\beta_3}) & p_2({\beta_3}) & {\gray 0} & p_4({\beta_3}) & p_5({\beta_3}) & {\gray 0} \\  p_1({\beta_4}) & {\gray 0} & p_3({\beta_4}) & p_4({\beta_4}) & {\gray 0} & p_6({\beta_4}) \\  p_1({\beta_5}) & {\gray 0} & p_3({\beta_5}) & p_4({\beta_5}) & {\gray 0} & p_6({\beta_5}) \\  p_1({\beta_6}) & {\gray 0} & p_3({\beta_6}) & p_4({\beta_6}) & {\gray 0} & p_6({\beta_6}) \\ {\gray 0} & p_2({\beta_7}) & p_3({\beta_7}) & {\gray 0} & p_5({\beta_7}) & p_6({\beta_7}) \\ {\gray 0} & p_2({\beta_8}) & p_3({\beta_8}) & {\gray 0} & p_5({\beta_8}) & p_6({\beta_8}) \\ {\gray 0} & p_2({\beta_9}) & p_3({\beta_9}) & {\gray 0} & p_5({\beta_9}) & p_6({\beta_9}) \\ \end{pmatrix} . $$


\section{Distributed Pseudoinverse}

For full-rank rectangular matrices $\Ab\in\R^{N\times M}$ where $N>M$, one resorts to the left Moore–Penrose pseudoinverse $\Ab^{\dagger}\in\R^{M\times N}$, for which $\Ab^{\dagger}\Ab=\Ib_M$. In Algorithm \ref{pinv_alg}, we present how to approximate the left pseudoinverse of $\Ab$, by using the fact that $\Ab^{\dagger}=(\Ab^\top\Ab)^{-1}\Ab^\top$; since $\Ab^\top\Ab\in\GL_M(\R)$. The right pseudoinverse $\Ab^{\dagger}=\Ab^\top(\Ab\Ab^\top)^{-1}$ of $\Ab\in\R^{M\times N}$ where $M<N$, can be obtained by a modification of Algorithm \ref{pinv_alg}.

Just like the inverse, the pseudoinverse of a matrix also appears in a variety of applications. Computing the pseudoinverse of $\Ab\in\R^{N\times M}$ for $N>M$ is even more cumbersome, as it requires inverting the Gram matrix $\Ab^\top\Ab$. For this appendix, we consider a full-rank matrix $\Ab$.

One could naively attempt to modify Algorithm \ref{inv_alg} in order to retrieve $\Ab^{\dagger}$ such that $\Ab^{\dagger}\Ab=\Ib_M$, by approximating the rows of $\Ab^{\dagger}$. This would \textit{not} work, as the underlying optimization problems would not be strictly convex. Instead, we use Algorithm \ref{pinv_alg} to estimate the rows of $\Bb^{-1}\coloneqq(\Ab^\top\Ab)^{-1}$, and then multiply the estimate $\widehat{\Bb^{-1}}$ by $\Ab^\top$. This gives us the approximation $\widehat{\Ab^{\dagger}}=\widehat{\Bb^{-1}}\cdot\Ab^\top$.

The drawback of Algorithm \ref{pinv_alg} is that it requires two additional matrix multiplications, $\Ab^\top\Ab$ and $\widehat{\Bb^{-1}}\Ab^\top$. We overcome this barrier by using a CMM scheme twice, to recover $\widehat{\Ab^{\dagger}}$ in a two or three-round communication CC approach. These are discussed in below.

Bounds on $\text{err}_F(\widehat{\Ab^{-1}})$ and $\text{err}_{\text{r}F}(\widehat{\Ab^{-1}})$ can be established for both algorithms, specific to the black-box least squares solver being utilized.

\begin{algorithm}[h]
\label{pinv_alg}
\SetAlgoLined
\KwIn{full-rank $\Ab\in\R^{N\times M}$ where $N>M$}
  $\Bb\gets \Ab^\top\Ab$\\
  \For{i=1 to M}
  {
    $\hat{\cb}_i = \arg\min_{\cb\in\R^{1\times M}} \Big\{g_i(\cb)\coloneqq\|\cb\Bb-\eb_i^\top\|_2^2\Big\}$\\ 
    $\hat{\bb}_i\gets \hat{\cb}_i\cdot\Ab^\top$
  }
 \Return $\widehat{\Ab^{\dagger}} \gets \left[ \hat{\bb}_1^\top \ \cdots \ \hat{\bb}_M^\top \right]^\top$ \Comment{$\widehat{\Ab^{\dagger}}_{(i)}=\bbh_i$}
 \caption{Estimating $\Ab^{\dagger}$}
\end{algorithm}

\begin{Cor}
\label{rel_err_bound_cor}
  For full-rank $\Ab\in\R^{N\times M}$ with $N>M$, we have $\err_{F}(\widehat{\Ab^{\dagger}})\leqslant\frac{\sqrt{M}\epsilon\cdot\kappa_2}{\sqrt{2}\sigma_{\mathrm{min}}(\Ab)^3}$ and $\err_{\mathrm{r}F}(\widehat{\Ab^{\dagger}})\leqslant\frac{\sqrt{M}\epsilon\cdot\kappa_2}{\sqrt{2}\sigma_{\mathrm{min}}(\Ab)^2}$ when using SD to solve the subroutine optimization problems of Algorithm \ref{pinv_alg}, with termination criteria $\|\nabla g_i(\cb^{[t]})\|_2\leqslant\epsilon$.
\end{Cor}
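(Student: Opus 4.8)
The plan is to follow the argument of Proposition~\ref{rel_err_bound_prop} almost verbatim, now applied to the $M$ inner least-squares problems of Algorithm~\ref{pinv_alg} --- which are posed over $\Bb=\Ab^\top\Ab\in\GL_M(\R)$ rather than over $\Ab$ --- and then to track the single extra right-multiplication by $\Ab^\top$. Write $\cb_i^\star\coloneqq\eb_i^\top\Bb^{-1}$ for the exact minimizer of $g_i(\cb)=\|\cb\Bb-\eb_i^\top\|_2^2$; it is unique because $\Bb$ is invertible and satisfies $g_i(\cb_i^\star)=0$, exactly as $f_i(\bb_{ls}^\star)=0$ in Proposition~\ref{rel_err_bound_prop}. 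The $i$-th row of the true pseudoinverse is $\Ab^{\dagger}_{(i)}=\eb_i^\top(\Ab^\top\Ab)^{-1}\Ab^\top=\cb_i^\star\Ab^\top$, whereas Algorithm~\ref{pinv_alg} returns $\widehat{\Ab^{\dagger}}_{(i)}=\hat{\bb}_i=\hat{\cb}_i\Ab^\top$, so the $i$-th row of the error matrix is $(\hat{\cb}_i-\cb_i^\star)\Ab^\top$.

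First I would control $\|\hat{\cb}_i-\cb_i^\star\|_2$. Since the singular values of $\Bb$ are $\sigma_j(\Ab)^2$, the strong-convexity constant of $g_i$ is $\sigma_{\min}(\Bb)^2=\sigma_{\min}(\Ab)^4$ (the analogue of the constant $\sigma_{\min}(\Ab)^2$ used for $f_i$). Combining the optimization gap~\eqref{opt_gap} with the termination criterion $\|\nabla g_i(\cb^{[t]})\|_2\le\epsilon$ and $g_i(\cb_i^\star)=0$ gives $\|\hat{\cb}_i\Bb-\eb_i^\top\|_2^2=g_i(\hat{\cb}_i)\le\epsilon^2/\bigl(2\sigma_{\min}(\Ab)^4\bigr)$. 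Writing $\hat{\cb}_i-\cb_i^\star=(\hat{\cb}_i\Bb-\eb_i^\top)\Bb^{-1}$ and using $\|\Bb^{-1}\|_2=\sigma_{\min}(\Ab)^{-2}$ then yields $\|\hat{\cb}_i-\cb_i^\star\|_2^2\le\epsilon^2/\bigl(2\sigma_{\min}(\Ab)^8\bigr)$, and submultiplicativity with $\|\Ab^\top\|_2=\sigma_{\max}(\Ab)$ gives $\|\hat{\bb}_i-\Ab^{\dagger}_{(i)}\|_2^2\le\sigma_{\max}(\Ab)^2\epsilon^2/\bigl(2\sigma_{\min}(\Ab)^8\bigr)$ for every $i\in\N_M$. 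Summing over $i$ and taking a square root gives $\mathrm{err}_F(\widehat{\Ab^{\dagger}})\le\frac{\sqrt{M}\,\sigma_{\max}(\Ab)\,\epsilon}{\sqrt{2}\,\sigma_{\min}(\Ab)^4}=\frac{\sqrt{M}\,\epsilon\,\kappa_2}{\sqrt{2}\,\sigma_{\min}(\Ab)^3}$, where the last equality uses $\sigma_{\max}(\Ab)=\kappa_2\,\sigma_{\min}(\Ab)$.

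For the relative bound I would divide by $\|\Ab^{\dagger}\|_F$ and lower-bound it exactly as in Proposition~\ref{rel_err_bound_prop}: the nonzero singular values of $\Ab^{\dagger}$ are $1/\sigma_j(\Ab)$, so $\|\Ab^{\dagger}\|_F^2=\sum_{j=1}^M\sigma_j(\Ab)^{-2}\ge\|\Ab^{\dagger}\|_2^2=\sigma_{\min}(\Ab)^{-2}$. Dividing the Frobenius estimate by this bound gives $\mathrm{err}_{\mathrm{r}F}(\widehat{\Ab^{\dagger}})\le\frac{\sqrt{M}\,\sigma_{\max}(\Ab)\,\epsilon}{\sqrt{2}\,\sigma_{\min}(\Ab)^3}=\frac{\sqrt{M}\,\epsilon\,\kappa_2}{\sqrt{2}\,\sigma_{\min}(\Ab)^2}$, which is the claimed inequality.

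I do not anticipate a real obstacle; the work is bookkeeping. The one spot that needs care is the condition-number accounting: one must use $\sigma_{\min}(\Bb)^2=\sigma_{\min}(\Ab)^4$ as the strong-convexity constant of $g_i$ (not $\sigma_{\min}(\Ab)^2$), and then correctly attribute the extra factor $\sigma_{\max}(\Ab)$ to the post-multiplication $\hat{\bb}_i=\hat{\cb}_i\Ab^\top$ --- it is precisely this factor that turns one power of $\sigma_{\min}(\Ab)$ into a $\kappa_2$ in both bounds, relative to the corresponding estimates for $\widehat{\Ab^{-1}}$ in Proposition~\ref{rel_err_bound_prop}.
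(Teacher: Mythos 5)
Your proposal is correct and follows essentially the same route as the paper's proof: apply the Proposition~\ref{rel_err_bound_prop} machinery to $\Bb=\Ab^\top\Ab$ (whose smallest singular value is $\sigma_{\min}(\Ab)^2$) to bound each $\|\hat{\cb}_i-\eb_i^\top\Bb^{-1}\|_2$ by $\frac{\epsilon/\sqrt{2}}{\sigma_{\min}(\Ab)^4}$, pick up the factor $\|\Ab^\top\|_2=\sigma_{\max}(\Ab)$ from the post-multiplication, sum over the $M$ rows, and divide by $\|\Ab^{\dagger}\|_F\geqslant\|\Ab^{\dagger}\|_2=1/\sigma_{\min}(\Ab)$ for the relative bound. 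Your bookkeeping is in fact slightly cleaner than the paper's (its intermediate display for $\mathrm{err}_F^2$ drops the factor $\tfrac{1}{2}$ that reappears in the stated bound), but there is no substantive difference.
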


\begin{proof}
From \eqref{deriv_rF_bd}, it follows that
$$ \|\Bb^{-1}\eb_i-\hat{\cb}_i^\top\|_2 \leqslant \frac{\epsilon/\sqrt{2}}{\sigma_{\mathrm{min}}(\Bb)^2} = \frac{\epsilon/\sqrt{2}}{\sigma_{\mathrm{min}}(\Ab)^4} \eqqcolon \delta \ . $$
The above bound implies that for each summand of the Frobenius error;
$\|\hat{\bb}_i-\Ab^{\dagger}_{(i)}\|_2=\|\hat{\cb}_i\Ab^\top-\eb_i^\top\cdot\Bb^{-1}\Ab^\top\|_2$, we have $\|\hat{\bb}_i-\Ab^{\dagger}_{(i)}\|_2\leqslant\delta\|\Ab^\top\|_2$. Summing the right hand side $M$ times, we get that
\begin{align*}
  \err_{F}(\widehat{\Ab^{\dagger}})^2 &\leqslant M\cdot(\delta\|\Ab^\top\|_2)^2\\
  &= \frac{M\epsilon^2\cdot\sigma_\text{max}(\Ab)^2}{2\sigma_\text{min}(\Ab)^8}\\
  &= \frac{M\epsilon^2\cdot\kappa_2^2}{2\sigma_\text{min}(\Ab)^6}\ .
\end{align*}
By taking the square root, we have shown the first claim.

Since $1/\sigma_{\text{min}}(\Ab)=\|\Ab^{\dagger}\|_2\leqslant\|\Ab^{\dagger}\|_F$, it then follows that
$$ \err_{\mathrm{r}F}(\widehat{\Ab^{\dagger}}) = \frac{\err_{F}(\widehat{\Ab^{\dagger}})}{\|\Ab^{\dagger}\|_F} \leqslant \frac{\err_{F}(\widehat{\Ab^{\dagger}})}{\|\Ab^{\dagger}\|_2} = \frac{\sqrt{M}\epsilon\cdot\kappa_2}{\sqrt{2}\sigma_{\mathrm{min}}(\Ab)^2}\ , $$
which completes the proof.
\end{proof}

\subsection{Pseudoinverse from Polynomial CMM}
\label{distr_Pseudinv_subsec}

One approach to leverage Algorithm \ref{pinv_alg} in a two-round communication scheme is to first compute $\Bb=\Ab^\top\Ab$ through a CMM scheme, then share $\Bb$ with all the workers who estimate the rows of $\widehat{\Bb^{-1}}$, and finally use another CMM to locally encode the estimated columns with blocks of $\Ab^\top$; to recover $\widehat{\Ab^{\dagger}}=\widehat{\Bb^{-1}}\cdot\Ab^\top$. Even though there are only two rounds of communication, the fact that we have a local encoding by the workers results in a higher communication load overall. An alternative approach which circumvents this issue, uses three-rounds of communication.

For this approach, we use the polynomial CMM scheme from \cite{YMAA17} twice, along with our coded matrix inversion scheme. This CMM has a reduced communication load, and minimal computation is required by the workers. To have a consistent recovery threshold across our communication rounds, we partition $\Ab$ as in \eqref{parts_A_iota} into $\kbar=\sqrt{n-s}=\sqrt{k}$ blocks. Each block is of size $N\times\Tbar$, for $\Tbar=\frac{M}{k}$. The encodings from \cite{YMAA17} of the partitions $\{\Ab_\iota\}_{\iota=1}^{\kbar}$ for carefully selected parameters $a,b\in\Z_+$ and distinct elements $\gamma_i\in\F_q$, are
$$ \Abt^a_i=\sum_{j=1}^k\Ab_j\gamma_i^{(j-1)a} \quad \text{ and } \quad \Abt^b_i=\sum_{j=1}^k\Ab_j\gamma_i^{(j-1)b} $$
for each worker indexed by $i$. Thus, each encoding is comprised of $N\Tbar$ symbols. The workers compute the product of their respective encodings $(\Abt^a_i)^\top\cdot\Abt^b_i$. The decoding step corresponds to an interpolation step, which is achievable when $\kbar^2=k$ many workers respond\footnote{We select $\kbar=\sqrt{k}$ in the partitioning of $\Ab$ in \eqref{parts_A_iota} when deploying this CMM, to attain the same recovery threshold as our inversion scheme.}, which is the optimal recovery threshold for CMM. Any fast polynomial interpolation or $\rs$ decoding algorithm can be used for this step, to recover $\Bb$.

Next, the master shares $\Bb$ with all the workers (from \ref{know_of_A_subs}, this is necessary), who are requested to estimate the \textit{column-blocks} of $\widehat{\Bb^{-1}}$
\begin{equation}
\label{parts_Binv}
  \widehat{\Bb^{-1}}=\Big[\Bcalb_1 \ \cdots \ \Bcalb_k\Big] \ \ \text{ where } \Bcalb_j\in\R^{M\times\Tbar}\ \forall j\in\N_k
\end{equation}
according to Algorithm \ref{inv_alg}. We can then recover $\widehat{\Bb^{-1}}$ by our $\brs$ based scheme, once $k$ workers send their encoding.

For the final round, we encode $\widehat{\Bb^{-1}}$ as 
$$ \Bbt^a_i=\sum_{j=1}^k\Bcalb_j\gamma_i^{(j-1)a} $$
which are sent to the respective workers. The workers already have in their possession the encodings $\Abt^b_i$. We then carry out the polynomial CMM where each worker is requested to send back $(\Bbt^a_i)^\top\cdot\Abt^b_i$. The master server can then recover $\widehat{\Ab^{\dagger}}$.

\begin{Thm}
\label{MDS_CC_psinv_thm}
Consider $\Gb\in\F^{n\times k}$ as in Theorem \ref{MDS_CC_thm}. By using any CMM, we can devise a matrix pseudoinverse CCM by utilizing Algorithm \ref{pinv_alg}, in two-rounds of communication. By using polynomial CMM \cite{YMAA17}, we achieve this with a reduced communication load and minimal computation, in three-rounds of communication.
\end{Thm}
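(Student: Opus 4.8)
The plan is to assemble the two communication schemes already sketched in Subsection~\ref{distr_Pseudinv_subsec} and to certify their correctness round by round by invoking the exactness of each constituent coded subroutine. The starting point is the factorization $\Ab^{\dagger}=\Bb^{-1}\Ab^\top$ with $\Bb\coloneqq\Ab^\top\Ab\in\GL_M(\R)$, which is nonsingular because $\Ab$ is full-rank with $N>M$. This splits the computation of $\Ab^{\dagger}$ into three distributable primitives: a matrix product $\Ab^\top\Ab$, an inversion of the $M\times M$ matrix $\Bb$, and a matrix product $\widehat{\Bb^{-1}}\Ab^\top$. The middle primitive is exactly what Theorem~\ref{MDS_CC_thm} and Lemma~\ref{inverse_lem} deliver: the $\brs$-based CCM recovers $\widehat{\Bb^{-1}}$ from any $k$ responsive workers, with the local encoding $\Gbt=\Ib_T\otimes\Gb$ run on top of Algorithm~\ref{inv_alg}; each of the two products is handled by an off-the-shelf CMM.

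For the two-round claim I would argue as follows. In Round~1 the workers run a CMM on the partitioned $\Ab$, so the coordinator recovers $\Bb=\Ab^\top\Ab$. In Round~2 the coordinator broadcasts $\Bb$ to every worker (necessary, by the discussion in~\ref{know_of_A_subs}), each worker estimates its allotted row-blocks of $\widehat{\Bb^{-1}}$ via Algorithm~\ref{pinv_alg}, immediately forms the CMM-style local encoding of these estimates against the corresponding blocks of $\Ab^\top$, and returns that single encoded block; once the recovery threshold is met, the coordinator decodes to $\widehat{\Ab^{\dagger}}=\widehat{\Bb^{-1}}\Ab^\top$. Correctness of each round follows from the exactness of the chosen CMM together with Theorem~\ref{MDS_CC_thm} for the inversion step; the only price is that the returned blocks are products and hence carry more symbols, which is precisely what motivates the three-round variant.

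For the three-round claim I would instantiate the two products with the polynomial CMM of~\cite{YMAA17}. Round~1: partition $\Ab$ into $\kbar=\sqrt{k}$ column-blocks of size $N\times\Tbar$ (so the polynomial code's optimal recovery threshold $\kbar^2$ equals $k$), dispatch the encodings $\Abt^a_i,\Abt^b_i$ to worker~$i$, collect the products $(\Abt^a_i)^\top\Abt^b_i$, and recover $\Bb$ by fast polynomial/$\rs$ interpolation once $\kbar^2=k$ workers respond. Round~2: the coordinator shares $\Bb$, the workers estimate the column-blocks $\Bcalb_j$ of $\widehat{\Bb^{-1}}$ as in~\eqref{parts_Binv} using Algorithm~\ref{inv_alg}, encode them with the $\brs$ generator matrix exactly as in~\eqref{enc_identity}, and the coordinator recovers $\widehat{\Bb^{-1}}$ from any $k$ of the $\Wb_\iota$ by Lemma~\ref{inverse_lem}. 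Round~3: the coordinator re-encodes $\widehat{\Bb^{-1}}$ in the polynomial-code fashion as $\Bbt^a_i$ and sends it to the workers, which still hold $\Abt^b_i$ from Round~1; the workers return $(\Bbt^a_i)^\top\Abt^b_i$ and the coordinator interpolates to obtain $\widehat{\Ab^{\dagger}}$. Since every transmitted object now contains $N\Tbar$ symbols and each worker performs only two small matrix multiplications, both the communication load and the per-worker computation are reduced relative to the two-round scheme.

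The step I expect to be the main obstacle is not any single calculation but the bookkeeping required to make the three recovery thresholds coincide and the encoded shapes compatible: choosing $\kbar=\sqrt{k}$ forces the polynomial CMM's threshold $\kbar^2$ to match the inversion CCM's threshold $k=n-s$, so that a worker straggling in one round does not stall the pipeline; and one must check that the block dimensions ($N\times\Tbar$ for the $\Ab_j$, $M\times\Tbar$ for the $\Bcalb_j$) line up so that $(\Abt^a_i)^\top\Abt^b_i$, the $\brs$ inversion decoding, and $(\Bbt^a_i)^\top\Abt^b_i$ are all well-defined and that their interpolation and decoding maps are the intended ones. Once this consistency is in place, chaining the exactness guarantees of the three rounds yields the theorem.
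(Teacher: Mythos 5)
Your proposal is correct and follows essentially the same route as the paper, which does not give a separate proof environment for this theorem but instead lets the round-by-round construction of Subsection~\ref{distr_Pseudinv_subsec} (generic CMM for the two-round variant; polynomial CMM with $\kbar=\sqrt{k}$, the $\brs$ inversion step, and the reuse of $\Abt^b_i$ for the three-round variant) serve as its justification. Your reconstruction matches that construction, including the threshold-matching choice $\kbar^2=k$ and the appeal to Theorem~\ref{MDS_CC_thm} and Lemma~\ref{inverse_lem} for the inversion round.
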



\bibliographystyle{IEEEtran}
\bibliography{refs}
\balance


\end{document}